\documentclass{article}
\usepackage{common}

\title{Direct Sampling with a Step Function}
\author{Andrew~M. Raim
\vspace{0.5em} \\
Center for Statistical Research and Methodology, U.S. Census Bureau
}
\date{}

\begin{document}

\maketitle

\begin{abstract}
The direct sampling method proposed by \citet[JCGS][]{WalkerEtAl2011} can generate draws from weighted distributions possibly having intractable normalizing constants. The method may be of interest as a useful tool in situations which require drawing from an unfamiliar distribution. However, the original algorithm can have difficulty producing draws in some situations. The present work restricts attention to a univariate setting where the weight function and base distribution of the weighted target density meet certain criteria. Here, a variant of the direct sampler is proposed which uses a step function to approximate the density of a particular augmented random variable on which the method is based. Knots for the step function can be placed strategically to ensure the approximation is close to the underlying density. Variates may then be generated reliably while largely avoiding the need for manual tuning or rejections. A rejection sampler based on the step function allows exact draws to be generated from the target with lower rejection probability in exchange for increased computation. Several applications of the proposed sampler illustrate the method: generating draws from the Conway-Maxwell Poisson distribution, a Gibbs sampler which draws the dependence parameter in a random effects model with conditional autoregression structure, and a Gibbs sampler which draws the degrees-of-freedom parameter in a regression with t-distributed errors.
\end{abstract}

\keywords{Weighted distribution; Intractable normalizing constant; Inverse CDF sampling; Rejection sampling; Gibbs sampling}

\blfootnote{%
Disclaimer: This article is released to inform interested parties of ongoing research and to encourage discussion of work in progress. Any views expressed are those of the author and not those of the U.S.~Census Bureau.
\begin{flushleft}
For correspondence: \\
Andrew~M. Raim (\url{andrew.raim@census.gov}) \\
Center for Statistical Research and Methodology \\
U.S.~Census Bureau \\
Washington, DC, 20233, U.S.A.
\end{flushleft}
}

\section{Introduction}
\label{sec:intro}

This paper revisits the direct sampling method proposed by \citet{WalkerEtAl2011}. Consider drawing a random variable $X$ with support $\Omega \subseteq \mathbb{R}$ whose density takes the form
\begin{align}
f(x) = w(x) g(x) / \psi, \quad x \in \Omega,
\quad \psi = \int_\Omega w(x) g(x) d\nu(x),
\label{eqn:weighted-density}
\end{align}
where $\nu(\cdot)$ is a dominating measure. The distribution of $x$ may be discrete, continuous, or continuous with point masses. Density $f$ can be recognized as a weighted distribution \citep[e.g.][]{PatilRao1978} with a weight function $w : \mathbb{R} \rightarrow [0, \infty)$ which adjusts the base density $g$ in some prescribed way. Direct sampling augments a random variable $U$ so that the joint distribution of $[X, U]$ is easier to draw than $X$ itself. Let $\ind(\cdot)$ be the indicator function and suppose $c = \sup_{x \in \Omega} w(x)$ is finite. Assume that $[U \mid X = x] \sim \text{Uniform}(0, w(x)/c)$, so that
\begin{align*}
f(u \mid x) = \frac{c}{w(x)} \ind(0 < u < w(x)/c).
\end{align*}
Define the event $A_u = \{ x \in \Omega : w(x) > u c \}$. The joint density of $[X,U]$ is then
\begin{align}
f(x,u) = \frac{c}{\psi} g(x) \ind(x \in A_u).
\label{eqn:joint-augmented}
\end{align}
%
%
%
From \eqref{eqn:joint-augmented}, the marginal density of $U$ may be obtained as
\begin{align*}
p(u) = \frac{c}{\psi} \Prob(A_u),
\quad u \in [0, 1],
\end{align*}
with $\Prob(A_u) = \int \ind(x \in A_u) g(x) d\nu(x)$. The distribution of $[X \mid U = u]$ is then
\begin{align}
f(x \mid u) = \frac{g(x)}{\Prob(A_u)} \ind(x \in A_u).
\label{eqn:fx-truncated}
\end{align}
Now $U$ is bounded in $[0,1]$, with $A_s \supseteq A_u$ if $s \leq u$ so that $\Prob(A_u)$ is monotonically nonincreasing in $u$. Evaluated at the endpoints $\{0,1\}$, $A_0$ is equivalent to the support of $w$ with $\Prob(A_0) = \int_{\Omega} \ind(w(x) > 0) g(x) d\nu(x)$ and $A_1$ is an empty set with $\Prob(A_1) = 0$.

A draw from $f(x)$ may be approximately obtained by drawing $U$ from $p(u)$ then $X$ from $f(x \mid u)$ in the following way. For a predefined positive integer $N$, compute
\begin{align}
q(k/N) = \frac{\Prob(A_{k/N})}{\sum_{\ell=0}^N \Prob(A_{\ell/N})}, \quad k = 0, 1, \ldots, N.
\label{eqn:knots}
\end{align}
Sample discrete random variable $K$ from the values $0, 1, \ldots, N$ with respective probabilities $q(0/N), \ldots, q(N/N)$, then draw from $[U \mid K = k] \sim \text{Beta}(k+1, N-k+1)$. The marginal density of $U$ is then proportional to
\begin{align}
&\sum_{k=0}^N \frac{u^k (1-u)^{N-k}}{B(k+1, N-k+1)} q(k/N) \nonumber \\
&\propto \sum_{k=0}^N \binom{N}{k} u^k (1-u)^{N-k} q(k/N),
\quad  u \in (0,1)
\label{eqn:bernstein}
\end{align}
where $B(a, b) = \Gamma(a) \Gamma(b) / \Gamma(a + b)$ is the beta function. Expression \eqref{eqn:bernstein} is an approximation to $p(u)$ by Bernstein polynomials \citep[e.g.][]{Rivlin1981}. A variate $x$ from the truncated distribution \eqref{eqn:fx-truncated} may be obtained by repeating draws of candidate $x^*$ from $g(x)$, which is straightforward in many applications, until $x^* \in A_u$ where $x$ is taken to be $x^*$. This algorithm was described by \citet{WalkerEtAl2011} as a basic implementation of their direct sampling idea.

Direct sampling is interesting as an alternative to standard strategies such as Metropolis-Hastings, slice sampling, rejection sampling, and adaptive rejection sampling \citep[e.g.][]{RobertCasella2004}; however, it does not yet appear to be widely adopted the in literature. An exception is \citet{BraunDamien2016}, who explore it as a scalable replacement for the inherently serial Markov chain Monte Carlo (MCMC) approach to Bayesian computing. The sampler described thus far may encounter challenges in practice which prevent it from successfully drawing from the target distribution. First, the basic rejection sampling method described to draw from \eqref{eqn:fx-truncated} may require a very large number of candidates when the set $A_u$ has small probability under $g(x)$. Second, the function $\Prob(A_u)$ may take large sudden steps not efficiently captured by polynomial approximation. Third, the support of $\Prob(A_u)$ may be concentrated on an interval $[u_L, u_H]$ with $u_H$ a very small positive number. For example, the second and third issues are seen in the bottom row of Figure~\ref{fig:cmp_step}.

It is possible to focus the Bernstein approximation~\eqref{eqn:bernstein} to the interval $[u_L,u_H]$ or consider other functional bases from the literature. However, we take an approach based on step functions which are relatively simple with low computational burden. \citet[Section~3.6]{MartinoEtAl2018} provide background on step functions in the context of rejection sampling; this motivates our use in approximating and drawing from the distribution $p(u)$. Expressions for the density, cumulative distribution function (CDF), and quantile function are available, and exact draws may be taken directly via the quantile function. Through appropriate placement of knot points, a step function can directly capture any jumps encountered in $p(u)$. Simple bounds on the accuracy of the approximation can be obtained in our setting, and such bounds can be improved by placing additional knot points until a desired tolerance is achieved. A step function can serve as an envelope in rejection sampling if exact draws from $f(x)$ are required. In addition to assuming univariate $X$, we restrict ourselves to weight functions $w$ where $A_u$ is an interval for each $u \in [0,1]$. Ideally, endpoints of $A_u$ and the CDF and quantile function of base distribution $g$ are readily computed.

The remainder of the paper proceeds as follows. Section~\ref{sec:truncated-base} discusses generating draws from \eqref{eqn:fx-truncated} in this setting without rejections. Section~\ref{sec:step-function} presents use of the step function in direct sampling. Section~\ref{sec:examples} considers three illustrative applications using this formulation of the direct sampler: drawing from the Conway-Maxwell Poisson distribution, a Gibbs sampler for a conditional autoregression random effects model including inference on the dependence parameter, and a Gibbs sampler for a regression model with errors following a Student's t-distribution including inference on the degrees of freedom. Finally, Section~\ref{sec:discussion} concludes the paper. Supporting code is provided as an electronic supplement, including materials to replicate the examples, implemented in both pure R \citep{Rcore2022} and with integrated C++ via the Rcpp framework \citep{Eddelbuettel2013}.

\section{Drawing from the Truncated Base Distribution}
\label{sec:truncated-base}
We first consider efficiently drawing from \eqref{eqn:fx-truncated}. Suppose density $g$ is associated with CDF and quantile functions
\begin{align*}
G(x) = \int_{-\infty}^x g(s) d\nu(s),
\quad
G^{-}(\varphi) = \inf\{ x \in \Omega : G(x) \geq \varphi \},
\end{align*}
respectively. With $A_u$ assumed to be an interval $(x_1(u), x_2(u))$, whose endpoints are identified by the roots of the equation $w(x) = cu$, \eqref{eqn:fx-truncated} represents the base distribution $g$ truncated to the interval $(x_1(u), x_2(u))$ with
\begin{align*}
f(x \mid u) = \frac{g(x)}{t-s} \cdot \ind(x_1(u) < x < x_2(u)),
\end{align*}
where $G(x-) = \lim_{t \uparrow x} G(t)$, $s = G(x_1(u))$ and $t = G(x_2(u)-)$ are CDF values evaluated at the endpoints, and $\lceil x \rceil$ and $\lfloor x \rfloor$ represent the ceiling and floor functions of $x$, respectively. The associated CDF of $[X \mid U = u]$ is
\begin{align}
F(x \mid u) =
\frac{G(x) - s}{t-s},
\quad x_1(u) < x < x_2(u),
\label{eqn:truncated-cdf}
\end{align}
with $F(x \mid u) = 0$ for $x < x_1(u)$ and $F(x \mid u) = 1$ for $x > x_2(u)$. We may invert $F(x \mid u)$ to obtain the $\varphi \in (0,1)$ quantile of $[X \mid U = u]$ as
\begin{align}
F^{-}(\varphi \mid u)
&= \inf\{ x \in A_u : F(x \mid u) \geq \varphi \} \nonumber \\
&= \inf\{ x \in \Omega : F(x \mid u) \geq \varphi \} \label{eqn:supp-expansion} \\
&= \inf\{ x \in \Omega : [G(x) - s] / (t-s) \geq \varphi \} \nonumber \\
&= \inf\{ x \in \Omega : G(x) \geq (t-s) \varphi + s \} \nonumber \\
&= G^{-}((t-s) \varphi + s).
\label{eqn:truncated-quantile}
\end{align}
To justify step \eqref{eqn:supp-expansion}, $x \in \Omega \setminus A_u$ implies $x \leq x_1(u)$ or $x \geq x_2(u)$ so that either $F(x \mid u) = 0$ and does not satisfy the criteria $F(x \mid u) \geq \varphi$, or $F(x \mid u) = 1$ and there is a smaller $x' \in (x_1(u), x_2(u))$ with $F(x' \mid u) \geq \varphi$. Therefore, including $\Omega \setminus A_u$ does not change the infimum. Now, an exact draw can be obtained via the inverse CDF method \citep[e.g.][Section~22.3]{Lange2010} using $x = F^{-}(V \mid u)$ with $V \sim \text{Uniform}(0,1)$.

\section{Step Function}
\label{sec:step-function}

To approximate the density $p(u)$, we first identify an interval $[u_L,u_H] \subseteq [0,1]$ which contains the ``descent'' from its maximum value to a value of zero; any further effort should be focused within this interval. Define $u_L$ as the smallest number such that $\Prob(A_{u_L}) < \Prob(A_0)$ for the unnormalized density and $u_H$ as the smallest number such that $\Prob(A_{u_H}) > 0$. A bisection method described in may be used to locate $u_L$ and $u_H$; see Remark~\ref{remark:bisection} at the end of this section.

To approximate the unnormalized $\Prob(A_u)$, let $u_0 < \cdots < u_N$ be knot points with $u_0 = u_L$ and $u_N = u_H$ and consider the function
\begin{align*}
h^*(u) = \Prob(A_{u_0}) \cdot \ind(0 \leq u < u_0) + \sum_{j=0}^{N-1} \Prob(A_{u_j}) \cdot \ind(u_j \leq u < u_{j+1}).
\end{align*}
A density is obtained using $h(u) = h^*(u) / a$ with
\begin{align*}
a = \int_0^1 h^*(u) du
= \Prob(A_{u_0}) \cdot u_0 + \sum_{j=0}^{N-1} \Prob(A_{u_j}) \cdot (u_{j+1} - u_j).
\end{align*}
The corresponding CDF is the piecewise linear function
\begin{align*}
H(u) = a^{-1} \Prob(A_{u_0}) u, \quad \text{if $0 \leq u < u_0$},
\end{align*}
and
\begin{align*}
H(u) = &a^{-1} \Prob(A_{u_0}) u_0 + a^{-1} \sum_{j=0}^{\ell-1} \Prob(A_{u_j}) \cdot (u_{j+1} - u_j) \\
&+ a^{-1} \Prob(A_{u_\ell}) \cdot (u - u_\ell),
\end{align*}
if $u_\ell \leq u < u_{\ell+1}$ for $\ell \in \{ 0, \ldots, N-1 \}$, $H(u) = 0$ if $u \leq 0$ and $H(u) = 1$ if $u \geq u_N$. The quantile function is also a piecewise linear function,
\begin{align*}
H^{-1}(\varphi) &= u_\ell + (u_{\ell+1} - u_{\ell})
\frac{
\varphi - H(u_\ell)
}{
H(u_{\ell+1}) - H(u_\ell)
}
\end{align*}
for $H(u_\ell) \leq \varphi < H(u_{\ell+1})$, $\ell \in \{ 0, \ldots, N-1 \}$. A draw from $h$ can now be generated by $u = H^{-1}(V)$ where $V \sim \text{Uniform}(0,1)$.

The following proposition shows that the closeness of $h(u)$ to $p(u)$ can be characterized by total variation distance. Let $\mathcal{R}_{j}$ represent the rectangle in $\mathbb{R}^2$ whose upper-left point is $(u_{j-1}, \Prob(A_{u_{j-1}}))$ and lower-right point is $(u_j, \Prob(A_{u_j}))$, for $j = 1, \ldots, N$. The area of $\mathcal{R}_{j}$ is $|\mathcal{R}_{j}| = \left[ \Prob(A_{u_{j-1}}) - \Prob(A_{u_j}) \right] (u_j - u_{j-1})$.

\begin{proposition}
\label{result:rectangles}
Let $\mathcal{B}$ denote the collection of measurable subsets of $[0,1]$; then
\begin{align}
\sup_{B \in \mathcal{B}}\left\lvert \int_B h(u) du - \int_B p(u) du \right\rvert \leq \frac{c}{\psi} \sum_{j=1}^N |\mathcal{R}_j|.
\label{eqn:rectangles}
\end{align}

\end{proposition}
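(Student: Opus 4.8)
The plan is to identify the left-hand side as the total variation distance between the probability measures with densities $h$ and $p$, and then to transfer the comparison onto the \emph{unnormalized} functions $h^*$ and $\Prob(A_u)$, where the rectangles $\mathcal{R}_j$ appear naturally. Write $\Psi = \int_0^1 \Prob(A_u)\,du$, so that $p(u) = \Prob(A_u)/\Psi$ with $\Psi = \psi/c$, while $h(u) = h^*(u)/a$. First I would invoke the standard identity that, for probability densities on $[0,1]$, $\sup_{B \in \mathcal{B}} \lvert \int_B h - \int_B p \rvert = \tfrac{1}{2}\int_0^1 \lvert h(u) - p(u) \rvert\,du$; the supremum is attained on $\{h > p\}$ and its complement because $\int_0^1 (h-p)\,du = 0$. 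This reduces the claim to bounding $\tfrac{1}{2}\int_0^1 \lvert h - p\rvert\,du$.

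Next I would separate the two sources of discrepancy, namely the shape mismatch between $h^*$ and $\Prob(A_u)$ and the mismatch between the normalizers $a$ and $\Psi$, by inserting the intermediate function $\Prob(A_u)/a$ and applying the triangle inequality pointwise. The first piece integrates to $a^{-1}\int_0^1 \lvert h^* - \Prob(A_u)\rvert\,du$ and the second, using $\int_0^1 \Prob(A_u)\,du = \Psi$, to $\lvert a^{-1} - \Psi^{-1}\rvert\,\Psi = a^{-1}\lvert a - \Psi\rvert$. Since $\lvert a - \Psi\rvert = \lvert\int_0^1 (h^* - \Prob(A_u))\,du\rvert \le \int_0^1 \lvert h^* - \Prob(A_u)\rvert\,du$, both pieces are at most $a^{-1}\int_0^1 \lvert h^* - \Prob(A_u)\rvert\,du$, so that $\int_0^1 \lvert h - p\rvert\,du \le 2a^{-1}\int_0^1 \lvert h^* - \Prob(A_u)\rvert\,du$. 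The factor of two here is cancelled exactly by the $\tfrac{1}{2}$ from the total variation identity, which is why I expect the constant to come out as $c/\psi$ rather than $2c/\psi$.

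The geometric heart is the bound $\int_0^1 \lvert h^* - \Prob(A_u)\rvert\,du \le \sum_{j=1}^N \lvert \mathcal{R}_j\rvert$. Because $\Prob(A_u)$ is nonincreasing, on each $[u_{j-1}, u_j)$ the step function equals the left value $\Prob(A_{u_{j-1}}) \ge \Prob(A_u)$, whence $0 \le h^*(u) - \Prob(A_u) \le \Prob(A_{u_{j-1}}) - \Prob(A_{u_j})$; integrating over the subinterval bounds its contribution by $\lvert\mathcal{R}_j\rvert$, since the monotone curve is trapped inside the rectangle. Summing over $j$ and using $a \ge \Psi$ (itself a consequence of $h^* \ge \Prob(A_u)$) then gives $\tfrac{1}{2}\int_0^1 \lvert h - p\rvert\,du \le a^{-1}\sum_{j=1}^N \lvert\mathcal{R}_j\rvert \le \Psi^{-1}\sum_{j=1}^N \lvert\mathcal{R}_j\rvert = (c/\psi)\sum_{j=1}^N \lvert\mathcal{R}_j\rvert$, as claimed.

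The step I expect to be the main obstacle is the boundary subinterval $[0,u_0)$, where $h^*$ takes the value $\Prob(A_{u_0})$ while $\Prob(A_u) = \Prob(A_0)$. There the clean pointwise inequality $h^* \ge \Prob(A_u)$, and with it the relation $a \ge \Psi$, can fail unless $\Prob(A_{u_0}) = \Prob(A_0)$. By the defining property of $u_L$ together with continuity of $u \mapsto \Prob(A_u)$ this equality does hold, eliminating the deficit; I would verify this point carefully, since otherwise $[0,u_0)$ contributes an extra term not present in $\sum_{j=1}^N \lvert \mathcal{R}_j\rvert$ and the constant $c/\psi$ would degrade.
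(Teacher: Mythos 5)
Your proof is correct, and it reaches the paper's bound by a genuinely different route. The paper never passes through the total-variation/$L^1$ identity; it bounds the two signed differences $\int_B h\,du - \int_B p\,du$ and $\int_B p\,du - \int_B h\,du$ separately for an arbitrary measurable $B$, using the normalizer inequality $\psi/c \le a \le 1$ asymmetrically in the two directions (in one direction replacing $1/a$ by $c/\psi$ in front of $\int_B h^*\,du$; in the other replacing $\int_B \Prob(A_u)\,du$ by $\int_B h^*\,du$ and then using $\int_B h^*\,du \le a$). You instead reduce the supremum to $\tfrac{1}{2}\int_0^1 \lvert h - p\rvert\,du$, insert the hybrid function $\Prob(A_u)/a$, and split the error into a shape term and a normalization term, each bounded by $a^{-1}\int_0^1 \lvert h^* - \Prob(A_u)\rvert\,du$, with the factor $2$ cancelling the $\tfrac{1}{2}$. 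Both arguments rest on the same two ingredients---the rectangle bound $\int_0^1 [h^*(u) - \Prob(A_u)]\,du \le \sum_{j=1}^N \lvert\mathcal{R}_j\rvert$ and $a \ge \psi/c$---and both in fact deliver the slightly sharper constant $1/a$ in place of $c/\psi$. Your version is more modular and makes the two sources of error explicit; the paper's version avoids the TV--$L^1$ identity and, as a byproduct, isolates the unnormalized inequality \eqref{eqn:unnorm-diff} and the quantity $(a - \psi/c)/a$, which it reuses verbatim in the proof of Proposition~\ref{result:direct-rejection}.

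On the boundary interval $[0, u_0)$: your concern is legitimate, and the paper resolves it exactly as you propose, by asserting $h^*(u) = \Prob(A_u)$ on $[0, u_0]$, i.e.\ $\Prob(A_{u_0}) = \Prob(A_0)$. As you note, this needs $u \mapsto \Prob(A_u)$ not to jump precisely at $u_L$: the function is automatically right-continuous (since $A_u$ increases to $A_{u_L}$ as $u \downarrow u_L$), but a left jump at $u_L$ would give $\Prob(A_{u_0}) < \Prob(A_0)$ and leave a deficit on $[0, u_0)$ that the rectangles do not cover, degrading the constant in either proof. Flagging this explicitly is a point in your favor; the paper makes the same assumption silently.
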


\begin{proof}
First considering the unnormalized densities,
\begin{align}
&\sup_{B \in \mathcal{B}}\left\lvert \int_B h^*(u) du - \int_B \Prob(A_u) du \right\rvert \\
&\quad= \sup_{B \in \mathcal{B}} \int_B [h^*(u) - \Prob(A_u)] du \nonumber \\
&\quad= \int_0^1 [h^*(u) - \Prob(A_u)] du \nonumber \\
&\quad= \sum_{j=1}^N \int_{u_{j-1}}^{u_j} [h^*(u) - \Prob(A_u)] du \nonumber \\
&\quad\leq \sum_{j=1}^N \int_{u_{j-1}}^{u_j} [\Prob(A_{u_{j-1}}) - \Prob(A_{u_j})] du \nonumber \\
&\quad= \sum_{j=1}^N [\Prob(A_{u_{j-1}}) - \Prob(A_{u_j})] (u_j - u_{j-1})
= \sum_{j=1}^N |\mathcal{R}_j|.
\label{eqn:unnorm-diff}
\end{align}
We have used the fact that $h^*(u) = \Prob(A_u)$ for $u \in [0, u_0]$ and $h^*(u) \geq \Prob(A_u)$ otherwise. Integrating each term of the inequality $\Prob(A_u) \leq h^*(u) \leq 1$ over $u \in [0,1]$ gives
\begin{align}
\psi/c \leq a \leq 1
\quad \iff \quad
1 \leq 1/a \leq c/\psi.
\label{eqn:norm-const-ineq}
\end{align}
Combining this with \eqref{eqn:unnorm-diff} yields inequalities for the normalized densities
\begin{align}
\int_B h(u) du - \int_B p(u) du
&\leq \frac{c}{\psi} \left[ \int_B h^*(u) du - \int_B \Prob(A_u) du \right] \nonumber \\
&\leq \frac{c}{\psi} \sum_{j=1}^N |\mathcal{R}_j|.
\label{eqn:norm-diff1}
\end{align}
and
\begin{align}
\int_B p(u) du - \int_B h(u) du
&\leq \frac{c}{\psi} \int_B h^*(u) du - \frac{1}{a} \int_B h^*(u) du \nonumber \\
&= \left[ \frac{a - \psi/c}{a \psi/c} \right] \int_B h^*(u) du \nonumber \\
&\leq \left[ \frac{a - \psi/c}{a \psi/c} \right] a \nonumber \\
&= \frac{c}{\psi} \left[ \int_0^1 h^*(u) du - \int_0^1 \Prob(A_u) du \right] \nonumber \\
&\leq \frac{c}{\psi} \sum_{j=1}^N |\mathcal{R}_j|.
\label{eqn:norm-diff2}
\end{align}
The result follows from \eqref{eqn:norm-diff1} and \eqref{eqn:norm-diff2}.
\end{proof}

The upper bound in \eqref{eqn:rectangles} is seen as a product of two factors: a constant $c/\psi$ which is determined by the density \eqref{eqn:weighted-density} of interest and $\sum_{j=1}^N |\mathcal{R}_j|$ which can be influenced by the selection of knots.

There are a number of possible choices for the knots $u_1, \ldots, u_{N-1}$. Equally-spaced knots $u_j = u_L + (j/N)(u_H - u_L)$ provide simplicity but can fail to capture regions of $[u_L,u_H]$ with sudden changes in $\Prob(A_u)$. Proposition~\ref{result:rectangles} motivates placement of knots to ensure that no $\mathcal{R}_j$ is too large. Namely, given $u_0, u_1, \ldots, u_k$ with associated rectangles $\mathcal{R}_1, \ldots, \mathcal{R}_k$ we consider placing a new knot $u_*$ at the midpoint of $[u_{j-1}, u_j]$ which has the largest $|\mathcal{R}_j|$. This replaces $\mathcal{R}_j$ with new rectangles $\mathcal{R}_j^{(1)}$ and $\mathcal{R}_j^{(2)}$, yielding an improvement $|\mathcal{R}_j^{(1)}| + |\mathcal{R}_j^{(2)}| < |\mathcal{R}_j|$ in regions where $\Prob(A_u)$ is decreasing; otherwise, $|\mathcal{R}_j^{(1)}| + |\mathcal{R}_j^{(2)}| = |\mathcal{R}_j|$ so that the bound in \eqref{eqn:rectangles} is no worse. Stated as Algorithm~\ref{alg:small_rects}, this method often provides a better selection of knots under a fixed $N$ than equally-spaced points, at the cost of increased computation. Use of a data structure such as a priority queue \citep[Section~6.5]{CormenEtAl2009} can help to avoid repeated sorting of $|\mathcal{R}_1|, \ldots, |\mathcal{R}_k|$. An illustration of one step of Algorithm~\ref{alg:small_rects} is shown in Figure~\ref{fig:approx-error}.

\begin{remark}[Midpoint]
\label{remark:midpoint}
The midpoint in Algorithm~\ref{alg:small_rects} is specified by a function $\text{mid}(x,y)$, with typical choices being the arithmetic mean $\text{mid}(x,y) = (x+y)/2$ or the geometric mean $\text{mid}(x,y) = (xy)^{1/2}$. The arithmetic mean may yield a better approximation when $\Prob(A_u) \gg 0$ on a large potion of $[0,1]$. However, the geometric mean may be preferred when some knots are extremely small. For example, if $u_L = 10^{-100}$ and $u_H = 10^{-10}$ and a large descent occurs near $u_L$, the geometric mean $10^{-55}$ is much closer to the descent than the arithmetic mean $\frac{1}{2} 10^{-100} + \frac{1}{2} 10^{-10} \approx \frac{1}{2} 10^{-10}$. An example where knots are needed very close to zero is given in Section~\ref{sec:example-cmp}. The geometric mean is assumed for the remainder of the paper unless otherwise noted.
\end{remark}

\begin{figure*}
\centering
\begin{subfigure}{0.40\textwidth}
\includegraphics[width=\textwidth, trim=1cm 0cm 0cm 0cm, clip]{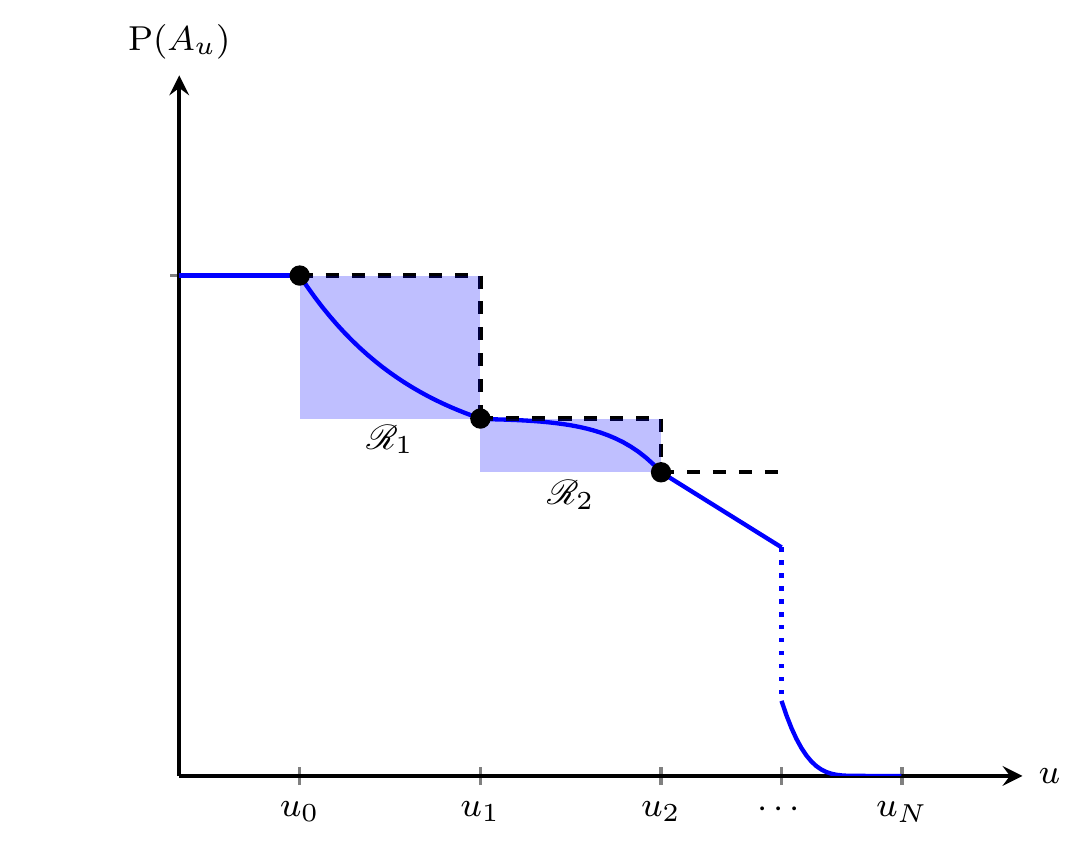}
\caption{}
\label{fig:approx-error1}
\end{subfigure}
\begin{subfigure}{0.40\textwidth}
\includegraphics[width=\textwidth, trim=1cm 0cm 0cm 0cm, clip]{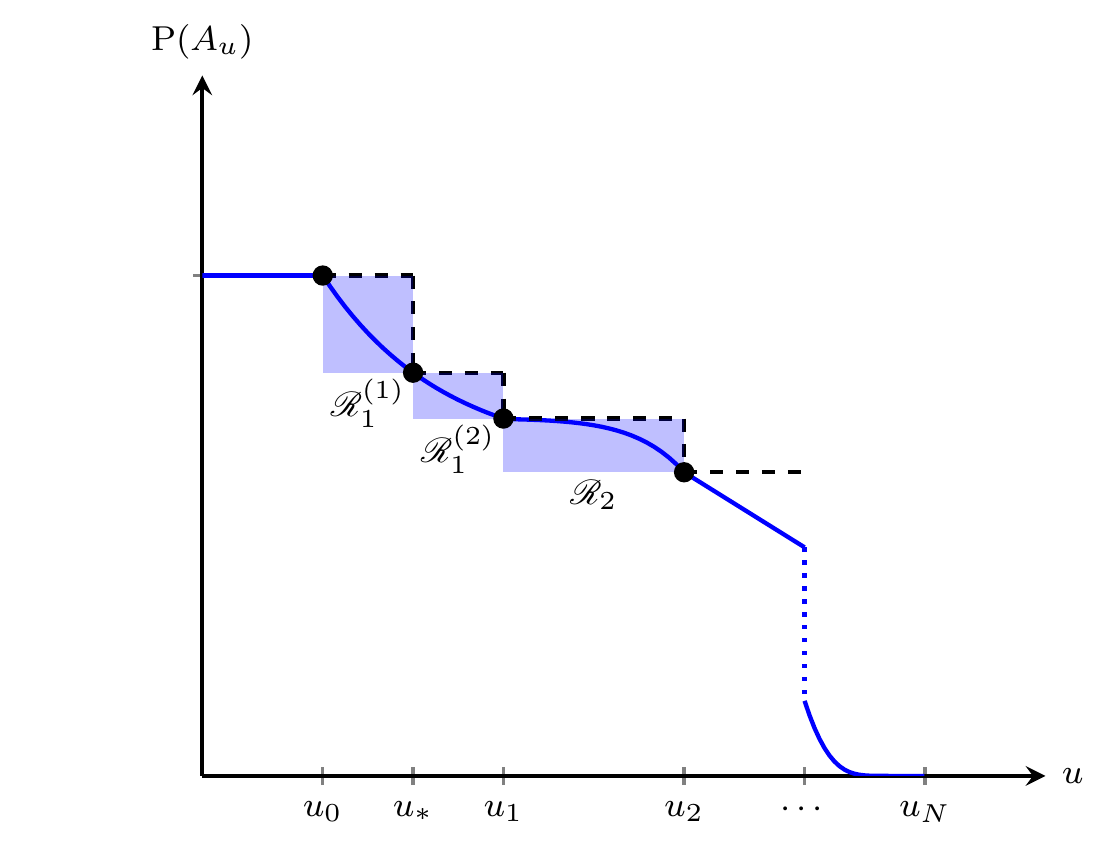}
\caption{}
\label{fig:approx-error2}
\end{subfigure}

\caption{(\subref{fig:approx-error1}) Step function $h^*(u)$ (dashed black lines) to approximate a concocted $\Prob(A_u)$ (solid blue curve) based on knots $u_0, \ldots, u_N$. Blue shaded areas represent rectangles $\mathcal{R}_j$. (\subref{fig:approx-error2}) Updated step function with $u_*$ inserted at midpoint of $u_0$ and $u_1$; $\mathcal{R}_1$ is replaced by $\mathcal{R}_1^{(1)}$ and $\mathcal{R}_1^{(2)}$.}
\label{fig:approx-error}
\end{figure*}

The step function $h^*(u)$ can be used to formulate a rejection sampler to take exact draws from $p(u)$ \citep{MartinoEtAl2018}. By construction, $h^*(u) \geq \Prob(A_{u})$ for all $u \in [0,1]$, so that rejection sampling can be carried out by Algorithm~\ref{alg:direct-sampling-with-rejection}. The bound in \eqref{eqn:rectangles} also bounds the probability of a rejection, which occurs when $[\text{Accept} = 0]$ in Line~\ref{step:dsr-acceptance} of Algorithm~\ref{alg:direct-sampling-with-rejection}.
\begin{proposition}
\label{result:direct-rejection}
The probability of rejection in Algorithm~\ref{alg:direct-sampling-with-rejection} is no greater than $\frac{c}{\psi} \sum_{j=1}^N |\mathcal{R}_j|$.
\end{proposition}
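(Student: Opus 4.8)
The plan is to write the rejection probability in closed form and then reduce the bound to two inequalities already established in the proof of Proposition~\ref{result:rectangles}.

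First I would read off the acceptance probability from the algorithm. Since the sampler proposes $u$ from the normalized envelope $h(u) = h^*(u)/a$ and accepts with probability $\Prob(A_u)/h^*(u)$ (which is a valid acceptance probability precisely because $h^*(u) \geq \Prob(A_u)$ for all $u$ by construction of $h^*$), the marginal probability of acceptance is
\begin{align*}
\int_0^1 \frac{\Prob(A_u)}{h^*(u)} \cdot \frac{h^*(u)}{a} \, du
= \frac{1}{a}\int_0^1 \Prob(A_u) \, du
= \frac{\psi/c}{a},
\end{align*}
where the last equality uses $\int_0^1 \Prob(A_u)\,du = \psi/c$, equivalently that $p(u) = (c/\psi)\Prob(A_u)$ integrates to one over $[0,1]$. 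The probability of rejection is therefore $1 - (\psi/c)/a = (a - \psi/c)/a$.

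Next I would bound this expression using the normalizing-constant inequality \eqref{eqn:norm-const-ineq}, which gives $a \geq \psi/c$ and hence $1/a \leq c/\psi$. Thus
\begin{align*}
\frac{a - \psi/c}{a}
\leq \frac{c}{\psi}\left(a - \frac{\psi}{c}\right)
= \frac{c}{\psi}\left[\int_0^1 h^*(u)\,du - \int_0^1 \Prob(A_u)\,du\right].
\end{align*}
The bracketed quantity is exactly the unnormalized difference controlled in \eqref{eqn:unnorm-diff} evaluated at $B = [0,1]$, which is bounded above by $\sum_{j=1}^N |\mathcal{R}_j|$. Chaining these inequalities delivers the claimed bound $\tfrac{c}{\psi}\sum_{j=1}^N |\mathcal{R}_j|$ on the rejection probability.

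The only step requiring genuine care is the first: correctly identifying the acceptance probability from the accept/reject step of the algorithm and confirming the envelope relation $h^*(u) \geq \Prob(A_u)$, which is immediate from the definition of $h^*$ together with the monotonicity of $\Prob(A_u)$ noted earlier. Everything thereafter is a direct reuse of the two facts ($a \geq \psi/c$ and $a - \psi/c \leq \sum_j |\mathcal{R}_j|$) already proven, so I expect no substantive obstacle beyond the bookkeeping of the normalizing constant $a$.
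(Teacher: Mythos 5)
Your proof is correct and follows essentially the same route as the paper: both compute the rejection probability exactly as $(a - \psi/c)/a$ by marginalizing the acceptance probability $\Prob(A_u)/h^*(u)$ over $U \sim h$ (the paper phrases this via the constant $M = a/(\psi/c)$ and the identity $1 - 1/M$, you by direct integration), and both then apply the same two facts from Proposition~\ref{result:rectangles}, namely $a - \psi/c \leq \sum_{j=1}^N |\mathcal{R}_j|$ from \eqref{eqn:unnorm-diff} and $1/a \leq c/\psi$ from \eqref{eqn:norm-const-ineq}. No gaps; the bookkeeping, including the use of $\int_0^1 \Prob(A_u)\,du = \psi/c$ and the envelope property $h^* \geq \Prob(A_u)$, matches the paper's argument.
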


\begin{proof}
Suppose $V \sim \text{Uniform}(0,1)$ and $U \sim h(u)$ are independent random variables and let $M = a / (\psi / c)$. The probability of rejecting a candidate is
\begin{align*}
\Prob\left( V > \frac{\Prob(A_U)}{h^*(U)} \right)
&= \Prob\left( V > \frac{ p(U) }{ M h(U) } \right) \\
&= 1 - \E_U \left[ \Prob\left( V \leq \frac{ p(U) }{ M h(U) } \given U \right) \right] \\
&= 1 - \E_U \left[ \frac{ p(U) }{ M h(U) } \right]
= 1 - \frac{ 1 }{ M }
= \frac{a - \psi / c}{a}.
\end{align*}
Applying the inequality $a - \psi / c = \int_0^1 [h^*(u) - \Prob(A_u)] du \leq \sum_{j=1}^N |\mathcal{R}_j|$ from \eqref{eqn:unnorm-diff} to the numerator and $a \geq \psi / c$ from \eqref{eqn:norm-const-ineq} to the denominator gives the result.
\end{proof}

As anticipated, rejection is assured to be less likely when $h^*$ and $\Prob(A_u)$ are closer. A rejected $u$ may be added to the set of knot points to decrease the probability of a rejection in subsequent proposals, as shown in Line~\ref{step:dsr-update} of Algorithm~\ref{alg:direct-sampling-with-rejection}.

\begin{algorithm}
\caption{Rejection sampler based on direct sampling with step function $h$.}
\label{alg:direct-sampling-with-rejection}
\begin{algorithmic}[1]
\Do
\State Draw candidate $u$ from step density $h$.
\State Draw $v$ from $\text{Uniform}(0,1)$.
\State $\text{Accept} \leftarrow \ind\{ v \leq \Prob(A_u) / h^*(u) \}$.
\State Update $h^*$ with $u$ as additional knot if adaptive rejection is desired. \label{step:dsr-update}
\doWhile{$\text{Accept} = 0$} \label{step:dsr-acceptance}
\State Draw $x$ from $f(x \mid u)$.
\State \Return $x$.
\end{algorithmic}
\end{algorithm}

\begin{algorithm}
\caption{Select knots $u_1, \ldots, u_{N-1}$ to reduce $\sum_{j=1}^N |\mathcal{R}_j|$.}
\label{alg:small_rects}
\begin{algorithmic}
\State Let $u^{(0)} = u_L$, and $u^{(1)} = u_H$.
\For{$i = 1, \ldots, N-1$}
\State Let $u_0 < \ldots < u_{i}$ be sorted $u^{(0)}, \ldots, u^{(i)}$.
\State Let $|\mathcal{R}_j| = \{ \Prob(A_{u_{j-1}}) - \Prob(A_{u_{j}}) \} (u_{j} - u_{j-1})$ for $j = 1, \ldots, i$.
\State Let $j^* = \argmax\limits_{j=1, \ldots, i} \; |\mathcal{R}_j|$.
\State Let $u^{(i+1)} = \text{mid}(u_{j^*-1}, u_{j^*})$.
\EndFor
\State Let $u_0 < \ldots < u_{N}$ be sorted $u^{(0)}, \ldots, u^{(N)}$.
\State \Return $(u_0, \ldots, u_{N})$.
\end{algorithmic}
\end{algorithm}

\begin{remark}[Bisection method]
\label{remark:bisection}
A bisection search method \citep[e.g.][Section~5]{Lange2010} is useful in several computations in this section. Suppose $\mathcal{S} \subseteq \mathbb{R}$ and $\zeta(x) : \mathcal{S} \rightarrow \{ 0, 1 \}$ is a step function which increases from 0 to 1 at a point $x^*$. The objective of Algorithm~\ref{alg:bisection} is to identify $x^*$ by supplying lower and upper bounds $x_L < x_H$ such that $\zeta(x_L) = 0$ and $\zeta(x_H) = 1$, a function $\text{mid}(x,y) : \mathcal{S}^2 \rightarrow \mathcal{S}$ which returns a point in $[x,y]$, and a distance function $\text{dist}(x,y)$. We may therefore write $x^* = \min\{ x \in [x_L, x_H] : \zeta(x) = 1\}$. Algorithm~\ref{alg:bisection} is useful in the following computations.
\begin{enumerate}
\item To find $u_L$, the smallest $u \in [0,1]$ such that $\Prob(A_{u}) < \Prob(A_0)$, we first locate a sufficiently small $j^* \in \{ 0, 1, 2, 4, 8, \ldots \}$ until $\Prob(A_{\exp(-j)}) = \Prob(A_0)$. Algorithm~\ref{alg:bisection} may be used with $\zeta(u) = I\{ \Prob(A_u) < \Prob(A_0) \}$ with $x_L = e^{-j^*}$, and $x_H = 1$.

\item To find $u_H$, the smallest $u \in [0,1]$ such that $\Prob(A_{u}) = 0$, Algorithm~\ref{alg:bisection} may be used with $\zeta(u) = I\{ \Prob(A_{u}) > 0 \}$, $x_L = u_L$, and $x_H = 1$. 

\item The quantile function $H^{-1}(\varphi)$ may be evaluated by Algorithm~\ref{alg:bisection}. Given precomputed values $H(u_0)$, \ldots, $H(u_N)$ of the associated CDF, the index $\ell$ of the interval containing $\varphi$ can be identified using $\mathcal{S} = \{0, 1, \ldots, N \}$, $x_L = 0$, $x_H = N$, $\text{mid}(\ell_1, \ell_2) = \lfloor (\ell_1 + \ell_2) / 2 \rfloor$, and $\zeta(\ell) = I\{ H(u_\ell) \geq \varphi \}$. From here, linearity between $H(u_\ell)$ and $H(u_{\ell+1})$ yields
\begin{align*}
H^{-1}(\varphi) = u_\ell + (u_{\ell+1} - u_\ell) \{ \varphi - H(u_\ell) \} / \{ H(u_{\ell+1}) - H(u_\ell) \}.
\end{align*}
\end{enumerate}

\begin{algorithm}
\caption{Bisection search for $x^* =  \min\{ x \in [x_L, x_H] : \zeta(x) = 1 \}$. Inputs are bounds $x_L < x_H$, a step function $\zeta(x)$ with $\zeta(x_L) = 0$ and $\zeta(x_H) = 1$, a midpoint function $\text{mid}(x,y)$, a distance function $\text{dist}(x,y)$, and a tolerance $\delta > 0$.}
\label{alg:bisection}
\begin{algorithmic}
\State $x = \text{mid}(x_L, x_H)$
\While{$\text{dist}(x_L, x_H) > \delta$}
\State $x_L = \zeta(x) \cdot x_L + [1 - \zeta(x)] \cdot x$
\State $x_H = \zeta(x) \cdot x + [1 - \zeta(x)] \cdot x_H$
\State $x = \text{mid}(x_L, x_H)$
\EndWhile \\
\Return $x$
\end{algorithmic}
\end{algorithm}

\end{remark}

\section{Illustrative Examples}
\label{sec:examples}

We now demonstrate the direct sampler with step function through three examples. Algorithm~\ref{alg:direct-sampling-with-rejection} is used throughout with a prespecified number $N$ of initial knots selected by Algorithm~\ref{alg:small_rects}, and subsequent knots added through adaptive rejection. The direct sampler requires more computation than alternative methods which are mentioned in the three examples, but also generates an exact sample with relatively very few rejections. All reported run times were measured on an Intel Core i7--2600 3.40 GHz workstation with four CPU cores.

\subsection{Sampling from Conway-Maxwell Poisson}
\label{sec:example-cmp}

The Conway-Maxwell Poisson distribution has become popular in recent years as a count model which can express either over- or underdispersion \citep{ShmueliEtAl2005}. The Conway-Maxwell Poisson distribution $\text{CMP}(\lambda, \nu)$ has probability mass function (pmf)
\begin{align}
f(x \mid \btheta) = \frac{\lambda^x}{(x!)^\nu Z(\lambda, \nu)}, \quad
x = 0, 1, \ldots, \quad
\lambda > 0, \quad
\nu > 0,
\label{eqn:cmp-density}
\end{align}
a weighted density in the form of \eqref{eqn:weighted-density}, with normalizing constant $Z(\lambda, \nu) = \sum_{x=0}^\infty \lambda^x / (x!)^\nu$. The over- or underdispersion of $\text{CMP}(\lambda, \nu)$ is most readily compared to $\text{Poisson}(\lambda)$: CMP is overdispersed when $\nu < 1$, underdispersed when $\nu > 1$, and is equivalent when $\nu = 1$. CMP also has several other well-known special cases. When $\lambda \in (0,1)$ and $\nu = 0$, $\text{CMP}(\lambda, \nu)$ becomes a Geometric distribution with density $f(x \mid \lambda) = (1 - \lambda) \lambda^x$ for $x = 0, 1, \ldots$. When $\nu \rightarrow \infty$, the $\text{CMP}(\lambda, \nu)$ density converges to that of $\text{Bernoulli}(\lambda / (1 + \lambda))$.

The normalizing constant $Z(\lambda, \nu)$ is a series which does not appear to have a closed form. Approximating the normalizing constant has been a topic of interest \citep[e.g.][]{GauntEtAl2019}. The expansion
\begin{align*}
Z(\lambda, \nu) &= \frac{ \exp(\nu \lambda^{1/\nu}) }{ \lambda^{(\nu-1)/2\nu} (2\pi)^{(\nu-1)/2} \nu^{1/2} }
\left\{ 1 + O(\lambda^{-1/\nu}) \right\}
\end{align*}
given by \citet{ShmueliEtAl2005} illustrates in particular that the magnitude of $Z(\lambda, \nu)$ can vary wildly with $\lambda$ and $\nu$. For example, $Z(\lambda, 1) = e^{\lambda}$ for any $\lambda > 0$ but $Z(2, 0.075) \approx e^{780.515}$. Given this volatility, an exact method of generating variates which avoids computation of the normalizing constant is desirable.

Several recent papers have considered Bayesian analysis with CMP using the exchange algorithm \citep{MollerEtAl2006, MurrayEtAl2006}. The exchange algorithm utilizes a data augmentation step in Metropolis-Hastings sampling; an exact draw from the data-generating model is used to avoid computing the normalizing constant $Z(\lambda, \nu)$ in the acceptance ratio and therefore obtain an MCMC sampler for the unknown parameters. \citet{ChanialidisEtAl2018} and \citet{BensonFriel2021} take rejection sampling approaches to generate exact CMP draws and implement the exchange algorithm: \citet{ChanialidisEtAl2018} creates an envelope based on a piecewise Geometric distribution, while \citet{BensonFriel2021} use an envelope based on the Geometric distribution when $\nu < 1$ and Poisson otherwise. 

The direct sampler described in Algorithm~\ref{alg:direct-sampling-with-rejection} can be used to obtain exact draws from CMP with low probability of rejection and avoid explicit computation of the normalizing constant. To do this, the cases $\nu \geq 1$ and $\nu < 1$, corresponding to under- and overdispersion, are now addressed individually.

\paragraph{Case $\nu \geq 1$.}
Rewrite the unnormalized density \eqref{eqn:cmp-density} as
\begin{align}
f(x \mid \btheta)
\propto \frac{\lambda^x}{(x!)^\nu}
= \left(\frac{\lambda}{1+\lambda} \right)^x \frac{1}{1 + \lambda} (1 + \lambda)^{x+1}
\frac{\lambda^x}{(x!)^\nu}
\label{eqn:cmp-decomp1}
\end{align}
and let base density $g(x \mid \lambda) = [\lambda (1 + \lambda)^{-1}]^x  (1 + \lambda)^{-1}$, $x = 0, 1, \ldots$, be the pmf of a $\text{Geometric}(1 / \{1 + \lambda \})$ distribution. The weight function, on the log-scale, and its first and second derivative are respectively
\begin{align}
&\log w(x \mid \lambda, \nu) = (x+1) \log(1 + \lambda) - \nu \log \Gamma(x+1),
\nonumber \\
&\frac{\partial}{\partial x} \log w(x \mid \lambda, \nu)
= \log(1 + \lambda) - \nu \psi(x+1),
\label{eqn:cmp-weight-grad-underdisp} \\
&\frac{\partial^2}{\partial x^2} \log w(x \mid \lambda, \nu) = - \nu \psi'(x+1).
\label{eqn:cmp-weight-hess-underdisp}
\end{align}
Note that $\psi(x+1)$ is increasing for $x \geq 0$, where $\psi(1) = -0.57721566\ldots$ is the Euler–Mascheroni constant. For any $\lambda > 0$, $\log(1 + \lambda) > 0$ so that we may locate an $x_L$ and $x_H$ that yield negative and positive values of \eqref{eqn:cmp-weight-grad-underdisp}, respectively. A root $x^*$ of \eqref{eqn:cmp-weight-grad-underdisp} exists in $[x_L, x_H]$, and may be identified using a root-finding method such as Algorithm~\ref{alg:bisection}. The function \eqref{eqn:cmp-weight-hess-underdisp} is negative for all $x$, so that $\log w(x \mid \lambda, \nu)$ is concave and $x^*$ is a maximizer with $c = \log w(x^* \mid \lambda, \nu)$. To find the endpoints $x_1(u) < x_2(u)$ of the interval $A_u = \{ x > 0 : \log w(x \mid \lambda, \nu) > \log (uc) \}$, root-finding may be applied twice to the function $\log w(x \mid \lambda, \nu) - \log (uc)$: once to obtain $x_1(u)$ from the interval $[0, c]$, and again to obtain $x_2(u)$ from the interval $[c, x_H^*]$, where $x_H^*$ is a number large enough that $\log w(x_H^* \mid \lambda, \nu) - \log (uc)$ is negative.

\paragraph{Case $\nu < 1$.}
Variates from $\text{CMP}(\lambda, \nu)$ can become very large as $\nu$ is taken closer to zero, especially when $\lambda \geq 1$. Here, the support of a $\text{Geometric}(1 / \{1 + \lambda \})$ base distribution may be practically disjoint from the target CMP, leading to extremely small probabilities in computations such as \eqref{eqn:truncated-cdf} and \eqref{eqn:truncated-quantile}. To illustrate, suppose $X \sim \text{CMP}(\lambda, \nu)$ with $\lambda = 2$ and $\nu = 0.075$; here, $\Prob(X \leq 7,306) \approx e^{-40}$ but $\Prob(S > 7,086) \approx e^{-2873.531}$ for $S \sim \text{Geometric}(1 / \{1 + \lambda \})$ so that $[X \leq S]$ effectively never occurs. A more convenient base distribution is given by the reparameterization of CMP based on $\nu$ and $\mu = \lambda^{1/\nu}$ used by \citet{GuikemaGoffelt2008} to formulate regression models. The unnormalized portion of density \eqref{eqn:cmp-density} may now be decomposed as
\begin{align}
f(x \mid \btheta)
\propto \frac{\mu^{\nu x}}{(x!)^\nu}
= \left(\frac{\mu}{1+\mu} \right)^x \frac{1}{1+\mu} (1 + \mu)^{x+1}
\frac{\mu^{x(\nu-1)}}{(x!)^\nu}
\label{eqn:cmp-decomp2}
\end{align}
so that the base density $g(x \mid \mu) = [\mu (1 + \mu)^{-1}]^x  (1 + \mu)^{-1}$, $x = 0, 1, \ldots$, is the pmf of $T \sim \text{Geometric}(1 / \{1 + \mu\})$. The quantiles of $T$ corresponding to probabilities 0.025 and 0.975 are 261 and 38,075, compared to 9,607 and 11,061 for $X$, suggesting that the distribution of $T$ is more suitable than that of $S$ as a base distribution. The log of the weight function and its first and second derivative are now, respectively,
\begin{eqnarray}
&\log w(x \mid \mu, \nu) &= (x+1) \log(1 + \mu) - \nu \log \Gamma(x+1) \nonumber \\
& &\quad + x(\nu-1)\log \mu,
\nonumber \\
&\frac{\partial}{\partial x} \log w(x \mid \mu, \nu)
&= \left\{ \log(1 + \mu) + (\nu-1) \log \mu \right\} \label{eqn:cmp-weight-grad-overdisp} \\
& &\quad - \nu \psi(x+1),
\nonumber \\
& \frac{\partial^2}{\partial x^2} \log w(x \mid \mu, \nu) &= - \nu \psi'(x+1).
\label{eqn:cmp-weight-hess-overdisp}
\end{eqnarray}
A root of \eqref{eqn:cmp-weight-grad-overdisp} exists if the term $\log(1 + \mu) + (\nu-1) \log \mu$ is positive. To verify positivity, if $\mu < 1$, then
\(
\log(1 + \mu) + (\nu-1) \log \mu \geq \log(1 + \mu) \geq 0;
\)
On the other hand, if $\mu \geq 1$ then
\(
\log(1 + \mu) + (\nu-1) \log \mu \geq \log(1 + \mu) - \log \mu = \log(1/\mu + 1) \geq 0.
\)
Maximization and root-finding may then proceed similarly to the case where $\nu \geq 1$.

\vspace{1em}

Figure~\ref{fig:cmp_step} compares the unnormalized density $\Prob(A_u)$ with unnormalized step function $h^*(u)$ in two CMP settings with $\lambda = 2$: one with $\nu = 0.5$ using $N = 13$ knots and one with $\nu = 0.2$ using $N = 20$ knots, corresponding to progressively higher levels of overdispersion. Three different knot selection methods are shown for comparison: equal spacing, Algorithm~\ref{alg:small_rects} using geometric midpoints, and Algorithm~\ref{alg:small_rects} using arithmetic midpoints. Although $\nu < 1$ in this example, the step function has been constructed from decomposition \eqref{eqn:cmp-decomp1} to illustrate the effect of using a base distribution which differs either moderately and greatly from the target. The case $\nu = 0.5$ is handled relatively well by all three methods, with arithmetic midpoint providing the best approximation followed by equal spacing, then geometric midpoint. A decrease in $\nu$ to $0.2$ is seen to create a much more difficult situation, with much of the density occurring in a small subinterval of $[u_L, u_H]$. Here, equal spacing will require a large $N$ to obtain a useful approximation. Algorithm~\ref{alg:small_rects} with arithmetic midpoint produces a better approximation than equal spacing, but has not yet located the steep descent shown on the left of the display. On the other hand, the geometric midpoint is able to capture this feature; this is due to its suitability with very small magnitude numbers as discussed in Remark~\ref{remark:midpoint}.

\begin{remark}[Weighted rectangles]
The approximation in Figures~\ref{fig:cmp_case2_step2} and \ref{fig:cmp_case2_step3} can be further improved, without increasing $N$, by using a weighted priority
\(
\omega \log \{ \Prob(A_{u_{j-1}}) - \Prob(A_{u_{j}}) \} + (1 - \omega) \log (u_{j} - u_{j-1}),
\)
$\omega \in (0,1)$, in place of $\log |\mathcal{R}_j|$ to order the rectangles. In particular, $\omega > 1/2$ prioritizes taller rectangles over wider ones having equal area which encourages knot placement at sudden descents occurring on very short intervals.
\end{remark}

Figure~\ref{fig:cmp-draws} displays draws of $\text{CMP}(\lambda, \nu)$ from Algorithm~\ref{alg:direct-sampling-with-rejection} with $\lambda = 2$ and $\nu \in \{ 0.05, 0.5, 2, 5 \}$. Here, decomposition \eqref{eqn:cmp-decomp1} is used with $\nu \in \{ 2, 5 \}$ and \eqref{eqn:cmp-decomp2} is used with $\nu \in \{ 0.05, 0.5 \}$. As anticipated, the empirical pmf of 20,000 draws matches closely to the exact pmf \eqref{eqn:cmp-density}. With $N = 10$ knots initially selected in each case, the number of rejections was 279, 86, 40, and 27 in Figures~\ref{fig:cmp-draws2}, \ref{fig:cmp-draws1}, \ref{fig:cmp-draws3}, and \ref{fig:cmp-draws4}, respectively. This demonstrates the ability of the samplers obtained in this section to generate CMP variates with small probability of rejection. This may be contrasted to acceptance rates as low as about 20\% reported by \citet{BensonFriel2021}; however, their rejection sampler requires less computation and therefore may be faster in practice.

\begin{figure*}
\centering
\begin{subfigure}{0.32\textwidth}
\includegraphics[width=\textwidth]{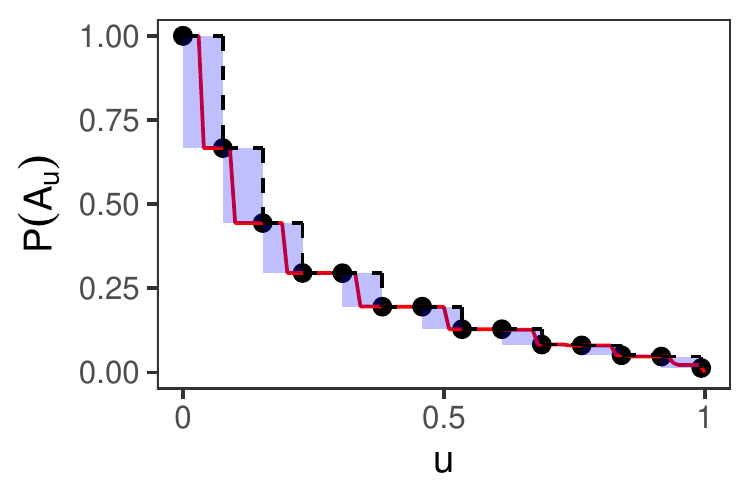}
\caption{$\sum_{j=1}^N |\mathcal{R}_j| = 0.0754$.}
\label{fig:cmp_case1_step1}
\end{subfigure}
\begin{subfigure}{0.32\textwidth}
\includegraphics[width=\textwidth]{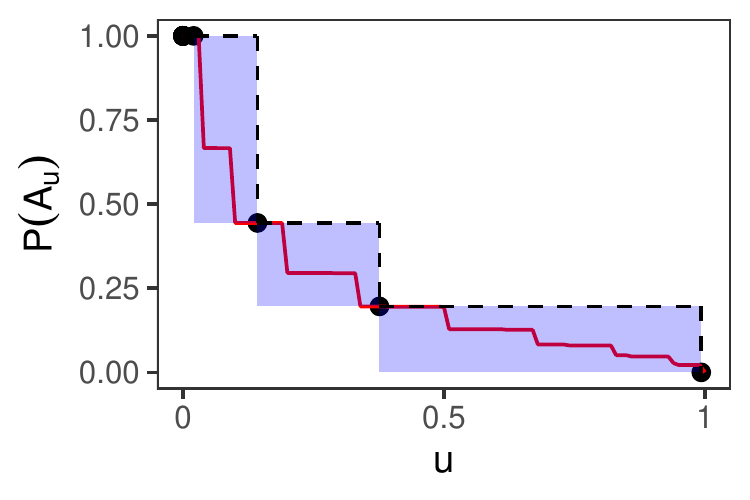}
\caption{$\sum_{j=1}^N |\mathcal{R}_j| = 0.2468$.}
\label{fig:cmp_case1_step2}
\end{subfigure}
\begin{subfigure}{0.32\textwidth}
\includegraphics[width=\textwidth]{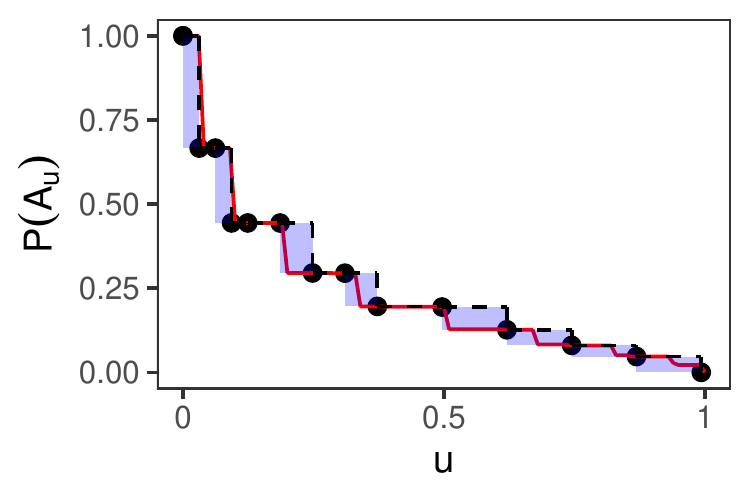}
\caption{$\sum_{j=1}^N |\mathcal{R}_j| = 0.0570$.}
\label{fig:cmp_case1_step3}
\end{subfigure}

\begin{subfigure}{0.32\textwidth}
\includegraphics[width=\textwidth]{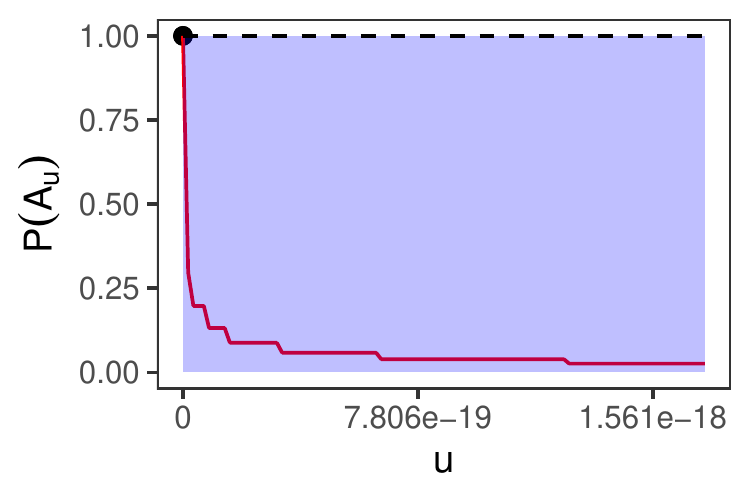}
\caption{$\sum_{j=1}^N |\mathcal{R}_j| = 4.561 \times 10^{-11}$.}
\label{fig:cmp_case2_step1}
\end{subfigure}
\begin{subfigure}{0.32\textwidth}
\includegraphics[width=\textwidth]{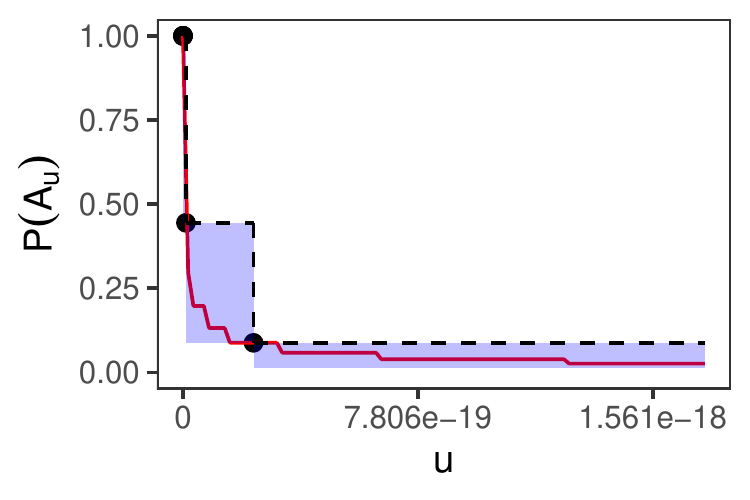}
\caption{$\sum_{j=1}^N |\mathcal{R}_j| = 1.007 \times 10^{-17}$.}
\label{fig:cmp_case2_step2}
\end{subfigure}
\begin{subfigure}{0.32\textwidth}
\includegraphics[width=\textwidth]{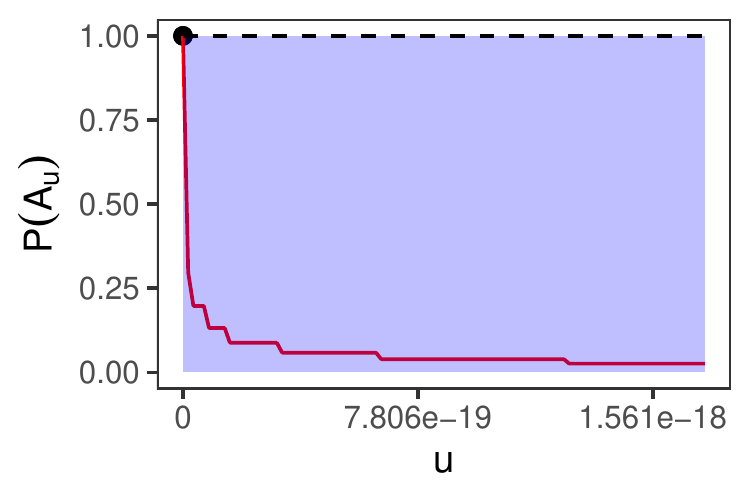}
\caption{$\sum_{j=1}^N |\mathcal{R}_j| = 1.743 \times 10^{-15}$.}
\label{fig:cmp_case2_step3}
\end{subfigure}

\caption{Realizations of $\Prob(A_u)$ (solid red line) and $h^*(u)$ (dashed line) for Conway-Maxwell Poisson with $\lambda = 2$. Shaded rectangles represent the rectangles $\mathcal{R}_j$. The top row (\subref{fig:cmp_case1_step1}), (\subref{fig:cmp_case1_step2}), and (\subref{fig:cmp_case1_step3}) correspond to $\nu = 0.5$ with $N = 13$ while the bottom row (\subref{fig:cmp_case2_step1}), (\subref{fig:cmp_case2_step2}), and (\subref{fig:cmp_case2_step3}) correspond to $\nu = 0.2$ with $N = 20$. Knots for (\subref{fig:cmp_case1_step1}) and (\subref{fig:cmp_case2_step1}) were selected by equal spacing, (\subref{fig:cmp_case1_step2}) and (\subref{fig:cmp_case2_step2}) were selected using Algorithm~\ref{alg:small_rects} with geometric mean to compute midpoints, and (\subref{fig:cmp_case1_step3}) and (\subref{fig:cmp_case2_step3}) were selected using Algorithm~\ref{alg:small_rects} with arithmetic mean to compute midpoints. Knots are excluded from the displays in (\subref{fig:cmp_case2_step1}), (\subref{fig:cmp_case2_step2}), and (\subref{fig:cmp_case2_step3}). Subcaptions display total rectangle area $\sum_{j=1}^N |\mathcal{R}_j|$ achieved by the approximation.}
\label{fig:cmp_step}
\end{figure*}

\begin{figure*}
\centering
\begin{subfigure}{0.24\textwidth}
\includegraphics[width=\textwidth]{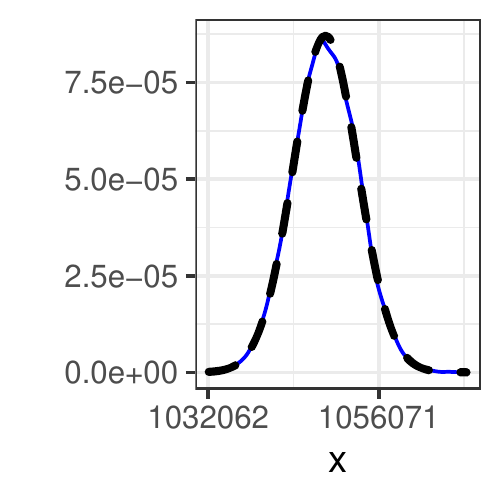}
\caption{$\nu = 0.05$.}
\label{fig:cmp-draws2}
\end{subfigure}
\begin{subfigure}{0.24\textwidth}
\includegraphics[width=\textwidth]{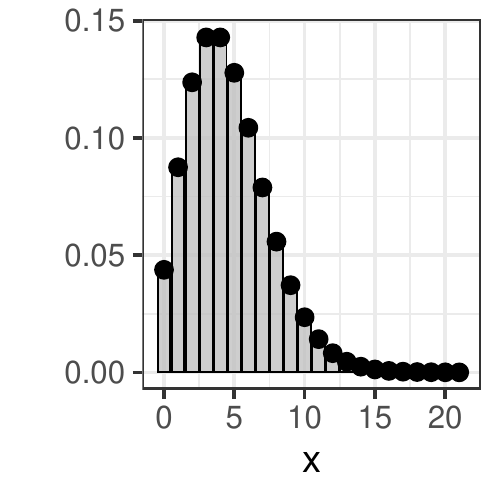}
\caption{$\nu = 0.5$.}
\label{fig:cmp-draws1}
\end{subfigure}
\begin{subfigure}{0.24\textwidth}
\includegraphics[width=\textwidth]{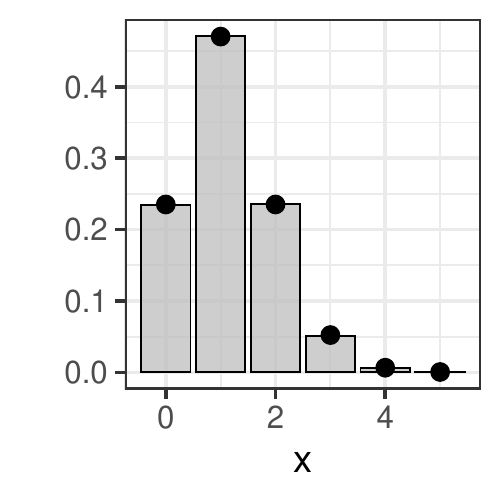}
\caption{$\nu = 2$.}
\label{fig:cmp-draws3}
\end{subfigure}
\begin{subfigure}{0.24\textwidth}
\includegraphics[width=\textwidth]{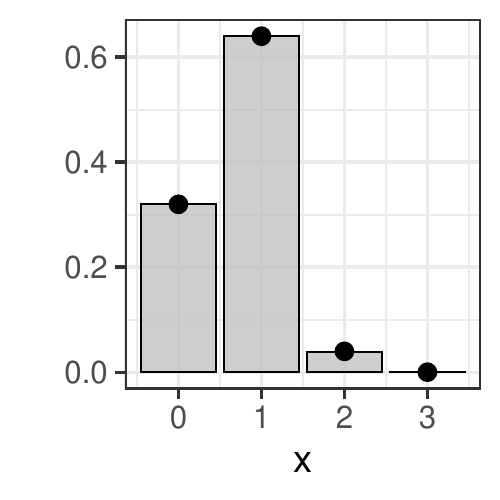}
\caption{$\nu = 5$.}
\label{fig:cmp-draws4}
\end{subfigure}
\caption{Empirical density of 20,000 draws versus pmf of CMP from with $\lambda = 2$ and $\nu$ as specified in the subcaption. In (\subref{fig:cmp-draws2}), solid blue line and dashed black line represent empirical density and CMP pmf, respectively. In (\subref{fig:cmp-draws1}), (\subref{fig:cmp-draws3}), and (\subref{fig:cmp-draws4}), gray bars and black dots represent empirical density and CMP pmf, respectively.}
\label{fig:cmp-draws}
\end{figure*}

\subsection{Sampling the Dependence Parameter in Conditional Autoregression}
\label{sec:example-car}
In a conditional autoregression (CAR) setting \citep[e.g.][Section~6.6]{Cressie1991}, the joint distribution of a random vector $\vec{\eta} = (\eta_1, \ldots, \eta_k)$ implies a certain regression for each conditional distribution $[\eta_i \mid \vec{\eta}_{-i}]$, $i = 1, \ldots, k$, where $\vec{\eta}_{-i} = (\eta_1, \ldots, \eta_{i-1}, \eta_{i+1}, \ldots \eta_k)$. Let us consider a particular mixed effects CAR model which is useful for data observed on areal units in a spatial domain. Suppose there are $k$ distinct areas and let $\vec{A} = (a_{ij})$ be a $k \times k$ adjacency matrix; $a_{ij} = 1$ if areas $i$ and $j$ are adjacent and $i \neq j$, otherwise $a_{ij} = 0$. Let $\vec{D} = \Diag(a_{1+}, \ldots, a_{k+})$ with $a_{i+} = \sum_{j=1}^k a_{ij}$ be a diagonal matrix containing the row sums of $\vec{A}$. Suppose $\vec{y} = (y_1, \ldots, y_n)$ is a vector of observed outcomes, $\vec{X} \in \mathbb{R}^{n \times d}$ and  $\vec{S} \in \mathbb{R}^{n \times k}$ are fixed design matrices, and
\begin{align}
&\vec{y} = \vec{X} \vec{\beta} + \vec{S} \vec{\eta} + \vec{\epsilon}, \quad
\vec{\epsilon} \sim \text{N}(\vec{0}, \sigma^2 \vec{I}), \label{eqn:car-model} \\
&\vec{\eta} \sim \text{N}(\vec{0}, \tau^2 (\vec{D} - \rho \vec{A})^{-1}).
\nonumber
\end{align}
Here it can be shown that the conditionals have distribution
\(
[\eta_i \mid \vec{\eta}_{-i}] \sim \text{N}(
(\rho / a_{i+}) \sum_{j=1}^k \eta_j a_{ij},
\tau^2 / a_{i+}).
\)
The parameter $\rho$ must be in the interval $[0,1]$; the matrix $\vec{D} - \rho \vec{A}$ is nonsingular provided that $\rho < 1$, while the inverse does not exist when $\rho = 1$ and a pseudo-inverse may be instead considered. To complete a Bayesian specification of the model, consider the prior
\begin{align}
&\vec{\beta} \sim \text{N}(0, \sigma_\beta^2 \vec{I}) \quad
\sigma^2 \sim \text{Uniform}(0, M_\sigma) \label{eqn:car-prior} \\
&\tau^2 \sim \text{Uniform}(0, M_\tau) \quad
\rho \sim \text{Uniform}(0, 1),
\nonumber
\end{align}
following \citet{Lee2013}. From \eqref{eqn:car-model} and \eqref{eqn:car-prior}, and regarding $\vec{\eta}$ as augmented data to be drawn with the parameters $\btheta = (\vec{\beta}, \sigma^2, \tau^2)$ and $\rho$, the following conditionals with familiar distributions are obtained for a Gibbs sampler:
\begin{enumerate}
\item $[\vec{\beta} \mid \rest] \sim \text{N}(\vec{\vartheta}_\beta, \vec{\Omega}_\beta^{-1})$ where
\(
\vec{\Omega}_\beta = \sigma^{-2} \vec{X}^\top \vec{X} + \sigma_\beta^{-2} \vec{I}
\)
and
\(
\vec{\vartheta}_\beta = \sigma^{-2} \vec{\Omega}_\beta^{-1} \vec{X}^\top (\vec{y} - \vec{S} \vec{\eta}),
\)

\item $[\vec{\eta} \mid \rest] \sim \text{N}(\vec{\vartheta}_\eta, \vec{\Omega}_\eta^{-1})$ where
\(
\vec{\Omega}_\eta = \sigma^{-2} \vec{S}^\top \vec{S} + \tau^{-2} (\vec{D} - \rho \vec{A})
\)
and
\(
\vec{\vartheta}_\eta = \sigma^{-2} \vec{\Omega}_\eta^{-1} \vec{S}^\top (\vec{y} - \vec{X} \vec{\beta}),
\)

\item $[\sigma^2 \mid \rest] \sim \text{IG}_{[0, M_{\sigma}]}(a_\sigma, b_\sigma)$, an Inverse Gamma distribution with shape $a_\sigma = n/2$ and rate
\(
b_\sigma = \frac{1}{2} \lVert \vec{y} - \vec{X} \vec{\beta} - \vec{S} \vec{\eta}  \rVert^2,
\)

\item $[\tau^2 \mid \rest] \sim \text{IG}_{[0, M_{\tau}]}(a_\tau, b_\tau)$ with $a_\tau = k/2$ and
\(
b_\tau = \frac{1}{2} \vec{\eta}^\top (\vec{D} - \rho \vec{A}) \vec{\eta}.
\)
\end{enumerate}
Here, $[\vec{X} \mid \rest]$ denotes the distribution of $\vec{X}$ based on all other random variables and a distribution with subscript $[a,b]$ denotes that it is truncated to that interval. The conditional for $\rho$ takes the more unfamiliar form
\begin{align}
f(\rho \mid \rest) &\propto |\vec{D} - \rho \vec{A}|^{1/2}
\exp\left\{ -\frac{\rho}{2 \tau^2} \vec{\eta}^\top \vec{A} \vec{\eta} \right\} \ind\{\rho \in [0,1]\}.
\label{eqn:rho-cond}
\end{align}
\citet{Lee2013} uses a Metropolis-Hastings approach to sample from \eqref{eqn:rho-cond}. At the $r$th iteration, a candidate $\rho^*$ is drawn from truncated Normal proposal distribution $\text{N}_{[0,1]}(\rho^{(r-1)}, \sigma_\text{prop}^2)$ so that $\rho^{(r)}$ is assigned to $\rho^*$ with probability
\(
\min\{1, f(\rho^* \mid \rest) / f(\rho^{(r-1)} \mid \rest) \}
\)
and to $\rho^{(r-1)}$ otherwise. This requires selecting---or adaptively tuning---the proposal variance $\sigma_\text{prop}^2$ to be large enough that the chain is not restricted to very small moves, but not too large that many proposals are rejected. Let us now consider a direct sampler to generate exact draws from \eqref{eqn:rho-cond}. First, suppose $\vec{U} \vec{\Phi} \vec{U}^\top$ is the spectral decomposition of $\vec{A}$ and let $\vec{Q} = \vec{\Phi}^{1/2} \vec{U}^\top \vec{D}^{-1} \vec{U} \vec{\Phi}^{1/2}$. Using a well-known property of determinants \citep[e.g.][Theorem~10.11]{BanerjeeRoy2014},
\begin{align}
|\vec{D} - \rho \vec{A}|
&= |\vec{D} - \rho \vec{U} \vec{\Phi} \vec{U}^\top| \nonumber \\
&= |\vec{D}| \cdot |\vec{I} - \rho \vec{Q}|.
\label{eqn:det-swap}
\end{align}
Let $\lambda_1 \geq \cdots \geq \lambda_k$ be the eigenvalues of $\vec{Q}$ with corresponding eigenvectors $\vec{v}_1, \ldots, \vec{v}_k$. Then $\vec{v}_i$ is also an eigenvector of $\vec{I} - \rho \vec{Q}$ with corresponding eigenvalue $1 - \rho \lambda_i$. Therefore, $|\vec{D} - \rho \vec{A}| = \prod_{i=1}^k a_{i+} \cdot \prod_{i=1}^k (1 - \rho \lambda_i)$. Note that the elements of $\vec{\Phi}^{1/2}$ may be complex numbers but $\lambda_1, \ldots, \lambda_k$ are real. From \eqref{eqn:rho-cond}, we may write
\begin{align*}
f(\rho \mid \rest) \propto \left[ \prod_{i=1}^k (1 - \rho \lambda_i) \right]^{1/2}
\exp\left\{ \frac{\rho}{2 \tau^2} \vec{\eta}^\top \vec{A} \vec{\eta} \right\} \ind\{\rho \in [0,1]\}.
\end{align*}
Let us take $g(\rho) = \ind\{\rho \in [0,1]\}$ so that the base distribution is $\text{Uniform}(0,1)$ and the weight function $w$ is specified on the log-scale by
\begin{align}
\log w(\rho) = \frac{1}{2} \sum_{i=1}^k \log(1 - \rho \lambda_i) + \frac{\rho}{2 \tau^2} \vec{\eta}^\top \vec{A} \vec{\eta} + \log \ind\{\rho \in [0,1]\}.
\label{eqn:car-weight-fn}
\end{align}
For $\rho \in (0,1)$,
\begin{align}
&\frac{\partial}{\partial \rho} \log w(\rho) = -\frac{1}{2} \sum_{i=1}^k \frac{\lambda_i}{1 - \rho \lambda_i} + \frac{1}{2 \tau^2} \vec{\eta}^\top \vec{A} \vec{\eta},
\label{eqn:car-weight-grad} \\
&\frac{\partial^2}{\partial \rho^2} \log w(\rho) = -\frac{1}{2} \sum_{i=1}^k \frac{\lambda_i^2}{(1 - \rho \lambda_i)^2}.
\label{eqn:car-weight-hess}
\end{align}
Now, for $\rho \in [0,1)$ and assuming all areas have at least one adjacent neighbor so that $\vec{D} - \rho \vec{A}$ is positive definite, both $|\vec{D}| = \prod_{i=1}^k a_{k+} > 0$ and $|\vec{D} - \rho \vec{A}| > 0$. Therefore, \eqref{eqn:det-swap} implies $0 < |\vec{I} - \rho \vec{Q}| = \prod_{i=1}^k (1 - \rho \lambda_i)$ so that $\rho \lambda_i < 1$ for each $i = 1, \ldots, k$. Now it can be seen that \eqref{eqn:car-weight-hess} is negative and a root of \eqref{eqn:car-weight-grad} is a maximum of \eqref{eqn:car-weight-fn}. Furthermore, $\frac{1}{2} \sum_{i=1}^k \frac{\lambda_i}{1 - \rho \lambda_i}$ is an increasing function of $\rho$ so that \eqref{eqn:car-weight-grad} has at most one root. Therefore, the maximizer $\rho^*$ of \eqref{eqn:car-weight-fn} occurs at the root if it exists; otherwise, it occurs at one of the endpoints $\{0,1\}$ of the domain. To find the roots $\{ \rho_1(u), \rho_2(u)\}$ of the interval $A(u) = \{ \rho \in [0,1] : w(\rho) > u c \}$, note that $\rho_1(u) = 0$ if $\log w(0) > \log(uc)$; otherwise, a solution $\rho_1(u)$ to $\log w(\rho) = \log(uc)$ may be found in $[0, \rho^*]$ numerically. Similarly, $\rho_2(u) = 1$ if $\log w(1) > \log(uc)$; otherwise, a solution $\rho_2(u)$ to $\log w(\rho) = \log(uc)$ may be found in $[\rho^*, 1]$ numerically. Operations involving the $\text{Uniform}(0,1)$ base distribution outlined in Section~\ref{sec:truncated-base} are simple, using expressions for the CDF $G(\rho) = \rho$ for $\rho \in [0,1]$ and quantile function $G^{-}(\varphi) = \varphi$ for $\varphi \in [0,1]$.

Now we have a complete Gibbs sampler based on conjugate steps to draw $\vec{\beta}$, $\vec{\eta}$, $\sigma^2$, and $\tau^2$, and a direct sampling step to draw $\rho$. To illustrate the sampler, we revisit the analysis from \citet{Lee2013} on property prices in Glasgow, Scotland. The data are available in the CARBayesdata package \citep{CARBayesdata2020}. There are $k = 271$ areal units with one observation per area so that $n = k$. The response $\vec{y}$ is taken to be log of median housing price (in thousands) of properties sold in 2008. Columns of design matrix $\vec{X}$ include an intercept (corresponding to $\beta_0$), log of number of recorded crimes per 10,000 residents ($\beta_1$), median number of rooms in a property ($\beta_2$), percentage of properties which sold in a year ($\beta_3$), and log of average driving time to the nearest shopping center ($\beta_7$). The remaining columns are based on a categorical variable indicating the most prevalent property type in the area with levels: ``flat'' ($\beta_4$), ``semi-detached'' ($\beta_5$), ``terraced'' ($\beta_6$), and ``detached'' (baseline). Because there is one observation per area, $\vec{S}$ is taken to be a $k \times k$ identity matrix.

Following \citet{Lee2013}, hyperparameter values are taken to be $\sigma_\beta^2 = 1000$, $M_\sigma = 1000$, and $M_\tau = 1000$. A direct sampler is used to draw $\rho$ following Algorithm~\ref{alg:direct-sampling-with-rejection}. Initially, $N = 30$ knots are selected in each iteration of the Gibbs sampler via Algorithm~\ref{alg:small_rects}. Table~\ref{tab:car-fit} compares summaries of the draws from this Gibbs sampler to those from CARBayes version 1.6 \citep{Lee2013}, utilizing a Gibbs sampler with Metropolis-Hastings step for $\rho$. Figure~\ref{fig:car-rho-mcmc} displays draws of $\rho$ from the two samplers. Following \citet{Lee2013}, results for both samplers are based on a chain of 100,000 draws with 20,000 discarded as burn-in and keeping one of every remaining 10 to yield 8,000 draws from each. The two results are quite similar, the most notable difference being that the posterior of $\rho$ is somewhat more right-skewed under the direct sampler. To obtain 100,000 draws of $\rho$, Metropolis-Hastings step rejected 40,232 proposals while the direct sampler rejected a total of only 458 proposals. Recall that each draw of Metropolis-Hastings samples approximately from conditional \eqref{eqn:rho-cond} while direct sampling with rejection is exact. However, the direct sampler required substantially more computation ub this setting, taking on the order of 1.7 hours compared to 10 minutes for CARBayes. Performance improvements for the former may be possible; in particular, the weight function \eqref{eqn:car-weight-fn} changes only by an additive constant $\frac{\rho}{2 \tau^2} \vec{\eta}^\top \vec{A} \vec{\eta}$ within each step of the Gibbs sampler so that repetition of some computations may be avoided.

\begin{table*}
\centering
\caption{Summary of draws for each parameter based on 8,000 saved draws: 100,000 total draws with 20,000 discarded as burn-in and 9 of every remaining 10 discarded for thinning.}
\label{tab:car-fit}
\begin{subtable}{0.48\textwidth}
\centering
\caption{Gibbs with direct sampling step.}
\label{tab:car-fit-direct}
\begin{tabular}{crrrr}
\toprule
\multicolumn{1}{c}{} &
\multicolumn{1}{c}{Mean} &
\multicolumn{1}{c}{SD} &
\multicolumn{1}{c}{2.5\%} &
\multicolumn{1}{c}{97.5\%} \\
\cmidrule(lr){1-1}
\cmidrule(lr){2-5}
$\beta_0$  &  4.7745 & 0.2537 &  4.2767 &  5.2608 \\
$\beta_1$  & -0.1129 & 0.0305 & -0.1721 & -0.0531 \\
$\beta_2$  &  0.2218 & 0.0253 &  0.1727 &  0.2720 \\
$\beta_3$  &  0.0023 & 0.0003 &  0.0017 &  0.0029 \\
$\beta_4$  & -0.2533 & 0.0578 & -0.3677 & -0.1417 \\
$\beta_5$  & -0.1624 & 0.0500 & -0.2602 & -0.0647 \\
$\beta_6$  & -0.2901 & 0.0627 & -0.4153 & -0.1671 \\
$\beta_7$  & -0.0017 & 0.0289 & -0.0581 &  0.0552 \\
$\sigma^2$ &  0.0244 & 0.0047 &  0.0151 &  0.0334 \\
$\tau^2$   &  0.0479 & 0.0180 &  0.0208 &  0.0903 \\
$\rho$     &  0.9885 & 0.0109 &  0.9591 &  0.9992 \\
\bottomrule
\end{tabular}
\end{subtable}
\begin{subtable}{0.48\textwidth}
\centering
\caption{Gibbs with Metropolis-Hastings step.}
\label{tab:car-fit-mh}
\begin{tabular}{crrrr}
\toprule
\multicolumn{1}{c}{} &
\multicolumn{1}{c}{Mean} &
\multicolumn{1}{c}{SD} &
\multicolumn{1}{c}{2.5\%} &
\multicolumn{1}{c}{97.5\%} \\
\cmidrule(lr){1-1}
\cmidrule(lr){2-5}
$\beta_0$  &  4.7576 & 0.2459 &  4.2734 &  5.2394 \\
$\beta_1$  & -0.1116 & 0.0307 & -0.1731 & -0.0525 \\
$\beta_2$  &  0.2222 & 0.0261 &  0.1710 &  0.2733 \\
$\beta_3$  &  0.0023 & 0.0003 &  0.0016 &  0.0029 \\
$\beta_4$  & -0.2558 & 0.0581 & -0.3687 & -0.1419 \\
$\beta_5$  & -0.1637 & 0.0512 & -0.2638 & -0.0635 \\
$\beta_6$  & -0.2927 & 0.0628 & -0.4157 & -0.1727 \\
$\beta_7$  & -0.0017 & 0.0294 & -0.0594 &  0.0556 \\
$\sigma^2$ &  0.0237 & 0.0048 &  0.0144 &  0.0333 \\
$\tau^2$   &  0.0540 & 0.0192 &  0.0241 &  0.0995 \\
$\rho$     &  0.9815 & 0.0146 &  0.9429 &  0.9979 \\
\bottomrule
\end{tabular}
\end{subtable}
\end{table*}

\begin{figure*}
\centering
\begin{subfigure}{0.48\textwidth}
\centering
\includegraphics[width=\textwidth]{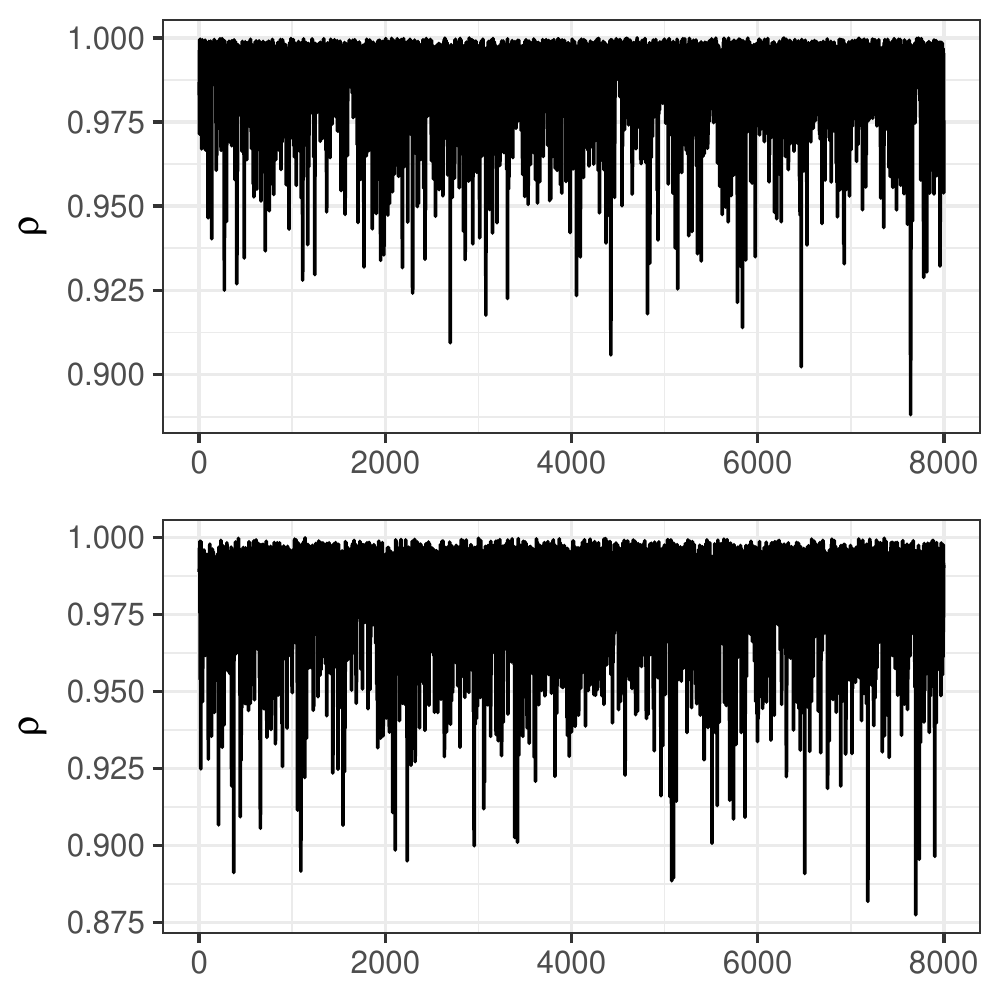}
\caption{Trace plots from direct sampler (top) and Metropolis-Hastings (bottom).}
\label{fig:car-rho-mcmc-draws}
\end{subfigure}
\begin{subfigure}{0.48\textwidth}
\centering
\includegraphics[width=\textwidth]{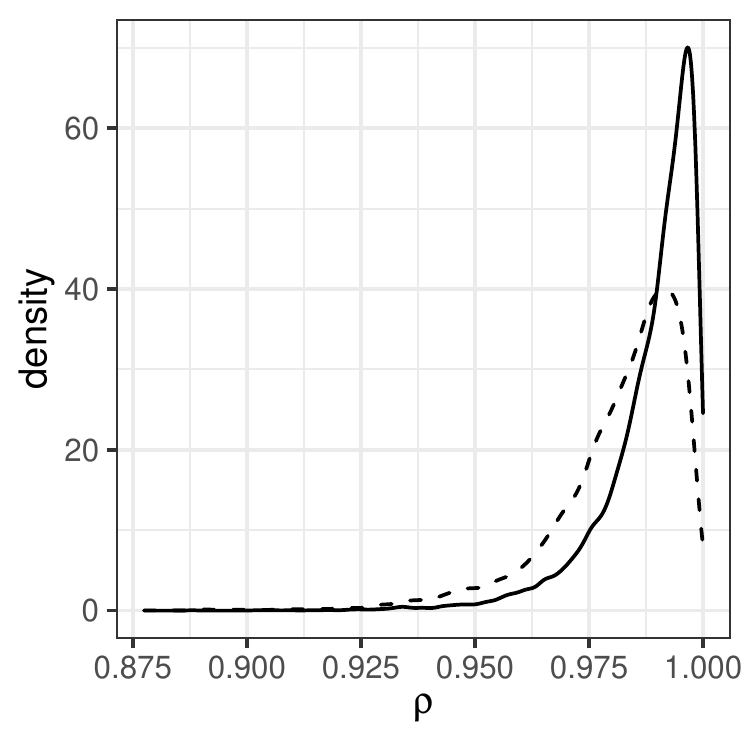}
\caption{Empirical density from direct sampler (solid line) and Metropolis-Hastings (dashed line).}
\label{fig:car-rho-mcmc-density}
\end{subfigure}
\caption{Draws of $\rho$ from Gibbs sampler with direct sampling step versus Metropolis-Hastings step.}
\label{fig:car-rho-mcmc}
\end{figure*}

\subsection{Sampling the Degrees-of-Freedom in Robust Regression}
\label{sec:example-tdist}

The Student t-distribution may be considered as an alternative to the Normal distribution when additional variability is needed in a linear model. Let $t_\nu$ denote a t-distribution with degrees of freedom $\nu$ and density function
\begin{align*}
f(x \mid \nu) =
\frac{\Gamma((\nu + 1) / 2)}{\Gamma(\nu / 2) \sqrt{\nu \pi}} \left(
1 + \frac{x^2}{\nu} \right)^{-(\nu+1)/2}.
\end{align*}
Suppose outcomes $y_i = \vec{x}_i^\top \vec{\beta} + \sigma \epsilon_i$ are observed for $i = 1, \ldots, n$, where $\epsilon_1, \ldots, \epsilon_n \iid t_\nu$ and $\vec{x}_i \in \mathbb{R}^d$ are given covariates. Using a particular data augmentation and taking $\nu$ to be fixed, it is possible to formulate a Gibbs sampler whose steps consist of drawing from standard distributions \citep{GelmanEtAl2013}. With $\nu$ not fixed and inference desired on $\btheta = (\vec{\beta}, \sigma, \nu)$, \citet{Geweke1994} proposes a rejection sampler for the conditional of $\nu$. Let us illustrate how a direct sampler can also be used to effectively generate draws from this nonstandard conditional distribution.

An augmented version of the model assumes variables $s_1, \ldots, s_n$ with $\vec{D}_s = \Diag(s_1, \ldots, s_n)$ such that
\begin{align}
\vec{y} = \vec{X} \vec{\beta} + \vec{\gamma}, \quad
\vec{\gamma} \sim \text{N}(\vec{0}, \vec{D}_s), \quad
s_i \iid \text{IG}(\nu / 2, \nu \sigma^2 / 2),
\label{eqn:t-regression}
\end{align}
for $i = 1, \ldots n$ with prior distributions $\vec{\beta} \sim \text{N}(\vec{0}, \sigma_\beta^2 \vec{I})$ and $\sigma^2 \sim \text{Gamma}(a_\sigma, b_\sigma)$. Furthermore, let $\nu$ have a $\text{Uniform}(a_\nu, b_\nu)$ prior. The joint distribution of all random variables is
\begin{align*}
&f(\vec{y}, \vec{s}, \mu, \sigma^2, \nu) = \\
&\quad (2 \pi)^{-n/2}
\left[ \prod_{i=1}^n s_i^{-1/2} \right]
\exp\left\{ -\frac{1}{2} \sum_{i=1}^n \frac{(y_i - \vec{x}_i^\top \vec{\beta})^2}{s_i} \right\} \\
&\quad \times
\frac{(\nu \sigma^2 / 2)^{n \nu / 2}}{\Gamma(\nu / 2)^n}
\left[ \prod_{i=1}^n s_i \right]^{-\frac{\nu}{2} - 1}
\exp\left\{-\frac{\nu \sigma^2}{2} \sum_{i=1}^n \frac{1}{s_i} \right\} \\
&\quad \times
\left[ 2\pi\sigma_\beta^2 \right]^{-d/2}
\exp\left\{ -\frac{1}{2 \sigma_\beta^2} \vec{\beta}^\top \vec{\beta} \right\} \\
&\quad \times
\frac{b_\sigma^{a_\sigma}}{\Gamma(a_\sigma)} (\sigma^2)^{a_\sigma - 1}
e^{-b_{\sigma} \sigma^2}
\ind(a_\nu \leq \nu \leq b_\nu).
\end{align*}
This distribution yields conditionals
\begin{align*}
&[\vec{\beta} \mid \vec{s}, \sigma^2, \nu, \vec{y}] \sim \text{N}\left(
\vec{\vartheta}_\beta, \vec{\Omega}_\beta^{-1}
\right), \\
&[\sigma^2 \mid \vec{s}, \vec{\beta}, \nu, \vec{y}] \sim \text{Gamma}\left(
a_\sigma + \frac{n \nu}{2}, b_\sigma + \frac{\nu}{2} \sum_{i=1}^n \frac{1}{s_i}
\right), \\
&[\vec{s} \mid \sigma^2, \vec{\beta}, \nu, \vec{y}] = \prod_{i=1}^n \text{IG}\left(s_i \given \frac{\nu + 1}{2}, \frac{\nu \sigma^2}{2} + \frac{(y_i - \vec{x}_i^\top \vec{\beta})^2}{2} \right),
\end{align*}
having familiar forms so that draws are straightforward, where
\begin{align*}
\vec{\Omega}_\beta = \vec{X}^\top \vec{D}_s^{-1} \vec{X} + \sigma_\beta^{-2} \vec{I},
\quad
\vec{\vartheta}_\beta = \vec{\Omega}_\beta^{-1} \vec{X}^\top \vec{D}_s^{-1} \vec{y},
\end{align*}
and $\vec{X} \in \mathbb{R}^{n \times d}$ is the matrix with rows $\vec{x}_1, \ldots, \vec{x}_n$. More interesting is the distribution of $[\nu \mid \vec{s}, \vec{\beta}, \sigma^2, \vec{y}]$, which has the form
\begin{align}
f(\nu \mid \vec{s}, \vec{\beta}, \sigma^2, \vec{y}) \propto w(\nu) g(\nu)
\label{eqn:nu-conditional}
\end{align}
with $\text{Uniform}(a_\nu, b_\nu)$ base distribution $g(\nu) = (b_\nu - a_\nu)^{-1} \cdot \ind(a_\nu \leq \nu \leq b_\nu)$ and weight function $w$ such that
\begin{align*}
&\log w(\nu) \\ 
&\quad= \frac{n \nu}{2} \log (\nu / 2) - n \log \Gamma(\nu / 2)
- \frac{\nu}{2} \sum_{i=1}^n \log \frac{s_i}{\sigma^2} \\
&\quad\quad- \frac{\nu}{2} \sum_{i=1}^n \frac{\sigma^2}{s_i}
+ \log \ind(a_\nu \leq \nu \leq b_\nu) \\
&\quad= n \left[ \frac{\nu}{2} \log \frac{\nu}{2} - \log \Gamma\left(\frac{\nu}{2}\right) \right] -
A \nu + \log \ind(a_\nu \leq \nu \leq b_\nu)
\end{align*}
where $A = \frac{1}{2} \sum_{i=1}^n \log (s_i / \sigma^2) + \frac{1}{2} \sum_{i=1}^n \sigma^2 / s_i$. Temporarily disregarding the indicator $\ind(a_\nu \leq \nu \leq b_\nu)$ and considering $\nu \in (0, \infty)$,
\begin{align}
\frac{d}{d \nu} \log w(\nu) &= 
\frac{n}{2} \left[ \log\left( \frac{\nu}{2} \right) - \psi\left( \frac{\nu}{2} \right) \right] + n/2 - A.
\label{eqn:tdist-weight-deriv}
\end{align}
It can be shown \citep[e.g.][]{Alzer1997} that
\begin{align*}
&\frac{1}{2x} < \log x - \psi(x) < \frac{1}{x}, \quad x > 0, \\
&\lim_{x \rightarrow 0} \frac{\log x - \psi(x)}{1/x} = 1, \quad
\lim_{x \rightarrow \infty} \frac{\log x - \psi(x)}{1/x} = \frac{1}{2}.
\end{align*}
Therefore $\log\left( \frac{\nu}{2} \right) - \psi(\frac{\nu}{2})$ is positive, decreases to 0 as $\nu$ increases, and increases as $\nu$ decreases to 0. Furthermore, the function $g(x) = \frac{1}{2} \log (x / \sigma^2) + \frac{1}{2} (\sigma^2 / x)$ is minimized by $x = \sigma^2$ so that $A = \sum_{i=1}^n g(s_i) \geq n/2$. Notice that \eqref{eqn:tdist-weight-deriv} has a root in $(0, \infty)$ when $A > n/2$, and has no root if $A = n/2$. We can gather some information about the behavior of $\log w(\nu)$ from \eqref{eqn:tdist-weight-deriv}.
\begin{enumerate}
\item When $\frac{d}{d \nu} \log w(\nu)$ has no root, it is always positive so that $\log w(\nu)$ is an increasing function. Here, $\log c = \log w(b_\nu)$.

\item When $\frac{d}{d \nu} \log w(\nu)$ has a root, $\log w(\nu)$ has a single maximizer $\nu^*$. Therefore, $\log w(\nu)$ is unimodal on $[a_\nu, b_\nu]$ with $\log c = \log w(\nu^*)$ if $\nu^* \in [a_\nu, b_\nu]$. If $\nu^* > b_\nu$, $\log w(\nu)$ is an increasing function on $[a_\nu, b_\nu]$ with $\log c = \log w(b_\nu)$. Otherwise, $\nu^* < a_\nu$, and $\log w(\nu)$ is a decreasing function on $[a_\nu, b_\nu]$ with $\log c = \log w(a_\nu)$.
\end{enumerate}
Numerical root finding such as Algorithm~\ref{alg:bisection} may be used to compute the endpoints $\{ \nu_1(u), \nu_2(u) \}$ of the interval $A_u(\nu) = \{ \nu > 0 : \log w(x) > \log (uc) \}$. If there is a maximizer $\nu^*$ in the interval $[a_\nu, b_\nu]$, $\nu_1$ will be found in $[a_\nu, \nu^*]$ and $\nu_2$ will be found in $[\nu^*, b_\nu]$. If $\log w(\nu)$ is strictly increasing, $\nu_2 = b_\nu$ and $\nu_1$ is found in $[a_\nu, b_\nu]$. Otherwise, if $\log w(\nu)$ is strictly decreasing, $\nu_1 = a_\nu$ and $\nu_2$ is found in $[a_\nu, b_\nu]$.

Notice that a bounded prior for $\nu$ is needed to obtain a finite maximum value $c$ of the weight function; therefore, our choice of Uniform prior is a departure from the Exponential prior assumed by \citet{Geweke1994}. A rejection sampler similar to Geweke's can be obtained by following the original derivation with several minor differences. First, Geweke's constant $A$ features an additional term with the Exponential hyperparameter which is now absent. Second, we take the proposal distribution to be a truncated Exponential distribution with density $q(x \mid \alpha, a_\nu, b_\nu) \propto \alpha e^{-x \alpha} \ind(a_\nu \leq x \leq b_\nu)$ rather than an untruncated Exponential distribution $q(x \mid \alpha) \propto \alpha e^{-x \alpha}$. Constraining $\alpha = 1/\nu$, let $\nu^*$ be the value of $\nu$ which maximizes the ratio $f(\nu \mid \vec{s}, \vec{\beta}, \sigma^2, \vec{y}) / q(\nu \mid \alpha, a_\nu, b_\nu)$; this $\nu$ satisfies
\begin{align}
\frac{n}{2} \left[ \log\left(\frac{\nu}{2}\right) + 1 - \psi\left(\frac{\nu}{2}\right) \right] + \frac{1}{\nu} - A = 0.
\label{eqn:geweke-root}
\end{align}
Algorithm~\ref{alg:geweke} gives a rejection sampler based on $q$ to generate candidates and the maximized ratio to determine when to accept.

Figure~\ref{fig:tdist_draws} compares the empirical density of 100,000 draws from the direct sampler with $N$ knots initially selected using Algorithm~\ref{alg:geweke}. The values $n = 200$, $a_\nu = 0.01$, and $b_\nu = 200$ are fixed and $A$ is varied to take on values 101, 120, 200, and 400. As expected, both samplers generate draws from the same target distribution. Table~\ref{tab:tdist-rejections} shows the number of rejections to obtain 100,000 draws for both samplers, now including $N \in \{ 5, 20, 50, 100 \}$ initially selected knots. Here it is apparent that Algorithm~\ref{alg:geweke} rejects on the order of ten candidates for each saved variate while the direct sampler rejects for less than 1\% of draws on average. However, within a practical Gibbs sampling setting, Algorithm~\ref{alg:geweke} may still be faster because each step requires very little computation.

\begin{algorithm}
\caption{Rejection sampling based on \citet{Geweke1994}.}
\label{alg:geweke}
\begin{enumerate}
\item Let $\nu^*$ be the value of $\nu$ which satisfies \eqref{eqn:geweke-root}.

\item
\label{step:candidate}
Draw candidate $\nu$ from the truncated Exponential distribution $q(x \mid 1/\nu^*, a_\nu, b_\nu)$.

\item Draw $\omega \sim \text{Uniform}(0,1)$ and accept $\nu$ as a draw from \eqref{eqn:nu-conditional} if
\begin{align*}
\omega <
\frac{
\left( \frac{\nu}{2} \right)^{n \nu / 2}
\left[ \Gamma\left(\frac{\nu}{2}\right) \right]^{-n}
\exp(-\nu A + \nu / \nu^*)
}{
\left( \frac{\nu^*}{2} \right)^{n \nu^* / 2}
\left[ \Gamma\left(\frac{\nu^*}{2}\right) \right]^{-n}
\exp(-\nu^* A + 1)
};
\end{align*}
otherwise, reject $\nu$ and go to step \ref{step:candidate}.
\end{enumerate}
\end{algorithm}

\begin{figure*}
\centering
\begin{subfigure}{0.24\textwidth}
\includegraphics[width=\textwidth]{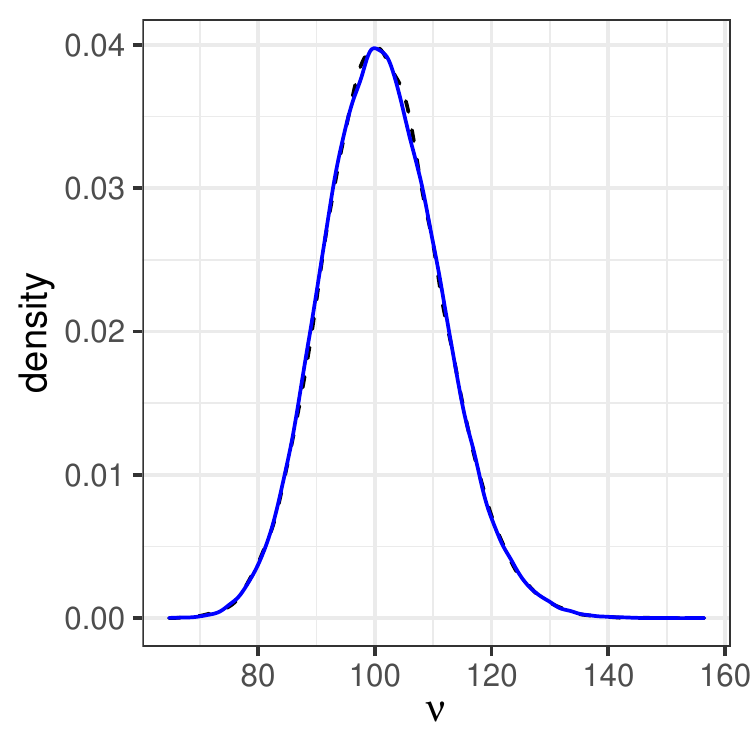}
\caption{$A = 101$.}
\label{fig:tdist-draws-A1}
\end{subfigure}
\begin{subfigure}{0.24\textwidth}
\includegraphics[width=\textwidth]{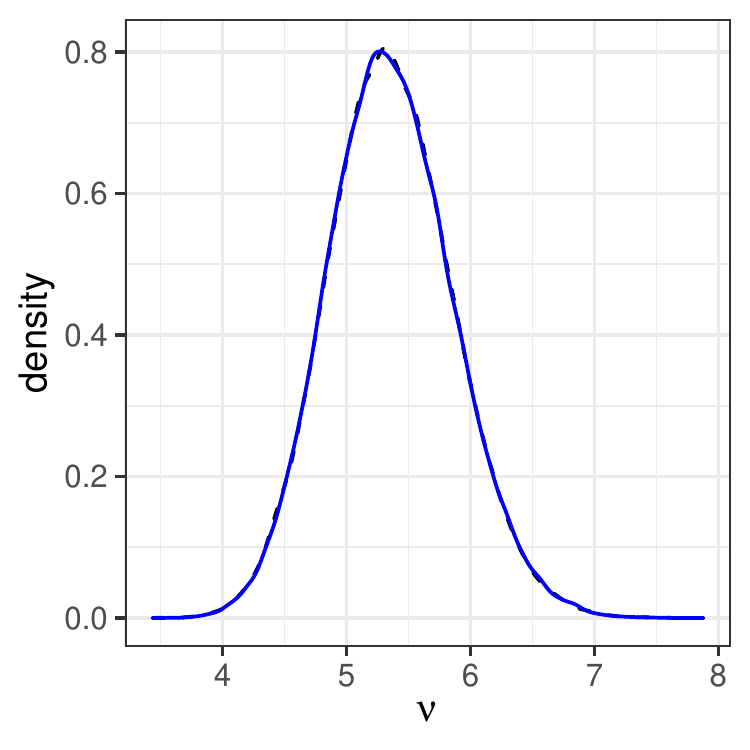}
\caption{$A = 120$.}
\label{fig:tdist-draws-A2}
\end{subfigure}
\begin{subfigure}{0.24\textwidth}
\includegraphics[width=\textwidth]{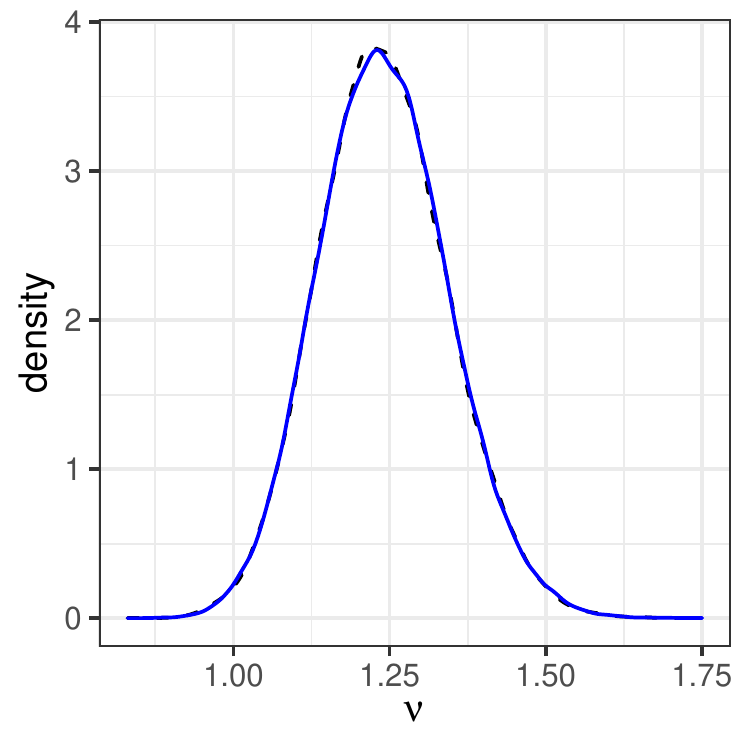}
\caption{$A = 200$.}
\label{fig:tdist-draws-A3}
\end{subfigure}
\begin{subfigure}{0.24\textwidth}
\includegraphics[width=\textwidth]{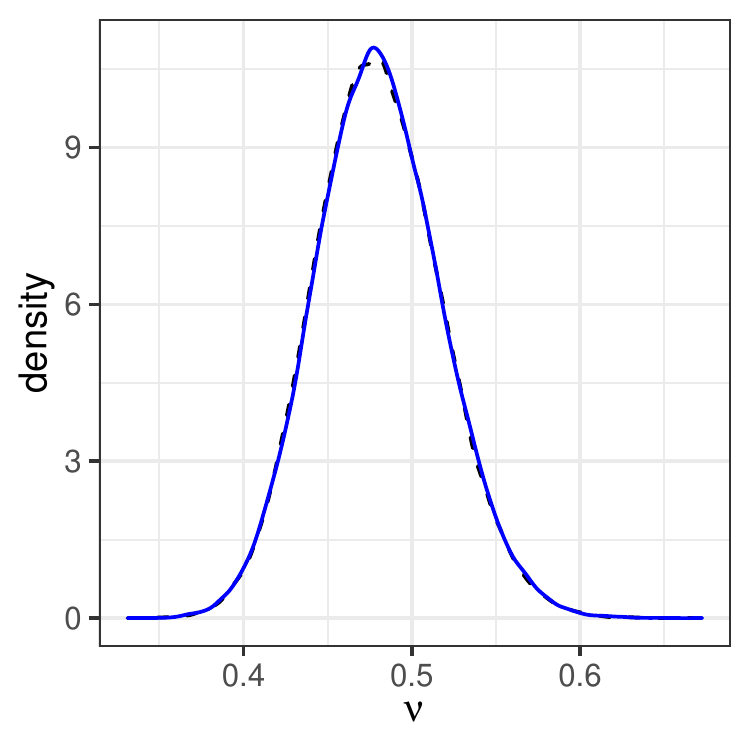}
\caption{$A = 400$.}
\label{fig:tdist-draws-A4}
\end{subfigure}
\caption{Empirical density of 100,000 draws from \eqref{eqn:nu-conditional} with sample size $n = 200$ and $A$ specified in the subcaption. Dashed black curve represents Algorithm~\ref{alg:geweke} and solid blue curve represents direct sampler with $N = 5$ initial knots.}
\label{fig:tdist_draws}
\end{figure*}

\begin{table}
\centering
\caption{Rejected candidates to obtain 100,000 draws using Algorithm~\ref{alg:geweke} and the direct sampler with $N$ initial knots.}
\label{tab:tdist-rejections}
\begin{tabular}{rrrrrr}
\toprule
\multicolumn{1}{c}{} &
\multicolumn{4}{c}{Direct Sampler} &
\multicolumn{1}{c}{} \\
\multicolumn{1}{c}{$A$} &
\multicolumn{1}{c}{$N = 5$} &
\multicolumn{1}{c}{$N = 20$} &
\multicolumn{1}{c}{$N = 50$} &
\multicolumn{1}{c}{$N = 100$} &
\multicolumn{1}{c}{Algorithm~\ref{alg:geweke}} \\
\cmidrule(lr){1-1}
\cmidrule(lr){2-5}
\cmidrule(lr){6-6}
101 & 608 & 647 & 589 & 495 &   841,390 \\
120 & 643 & 605 & 581 & 496 & 1,049,358 \\
200 & 622 & 575 & 549 & 523 & 1,173,088 \\
400 & 614 & 564 & 581 & 533 & 1,273,444 \\
\bottomrule
\end{tabular}
\end{table}

\citet{LangeEtAl1989} provide a number of interesting examples of regression analyses using t-distributed errors. In particular, their Example 3 studies the relationship between two measurements of blood flow in the canine myocardium. The variable $r_i$ measures regional myocardial blood flow from an invasive procedure, while $y_i$ is a measurement obtained using positron emission tomography within $n$ cases indexed $i = 1, \ldots, n$. It is assumed that
\begin{align*}
&y_i = \mu(r_i) + \sigma \epsilon_i, \quad
\epsilon_i \sim t_{\nu}(0,1), \\
&\mu(r) = r \{1 - \phi_1 \exp( -\phi_2 / r) \},
\end{align*}
for parameters $\vec{\phi} = (\phi_1, \phi_2)$. We consider a simulated dataset based on this setting, with $n = 200$ and $r_i \iid \text{Uniform}(0,10)$. Data generating values of parameters $\vec{\phi}$ are to taken to be $\phi_1 = 0.746$ and $\phi_2 = 274.7$, based on estimates reported in \citet{LangeEtAl1989}, while $\nu = 2$ and $\sigma = 1.25$ are taken to be the degrees of freedom and scale for the random errors. To fit linear model \eqref{eqn:t-regression}, the $i$th row $\vec{x}(r_i)$ of design matrix $\vec{X}$ is obtained from a cubic polynomial basis using the bs function in the R splines package \citep{Rcore2022}. We apply the Gibbs sampler with direct sampling to draw $\nu$ using $N = 30$ initial knots. A chain of 10,000 iterations is computed with first 5,000 discarded as as burn-in sample. Hyperparameters are taken to be $a_\nu = 0.01$, $b_\nu = 200$, $a_\sigma = 1$, $b_\sigma = 1$, and $\sigma_\beta^2 = 100$. Table~\ref{tab:tdist-fit} summarizes the saved draws of $\btheta$, while Figure~\ref{fig:tdist-fit} compares the fitted function to the true function $\mu(r)$. The model appears to be capturing the data-generating values of $\sigma$, $\nu$, and $\mu(r)$ appropriately. We note that total sampling time was 12.2 seconds, of which 10.3 seconds was spent drawing $\nu$ with the direct sampler. There were 260 rejections in the 10,000 draws of $\nu$.

\citet{Hosszejni2021-arxiv} presents a recent survey for Bayesian inference of $\nu$. Here it is noted that the approach of \citet{Geweke1994}---considered in the present section---works well for small $\nu$ but mixing tends to worsen for larger $\nu$. Therefore, other sampling strategies are recommended when $\nu$ may be larger.

\begin{table}
\centering
\caption{Summary of posterior distribution.}
\label{tab:tdist-fit}
\begin{tabular}{crrrrr}
\toprule
\multicolumn{1}{c}{} &
\multicolumn{1}{c}{Mean} &
\multicolumn{1}{c}{SD} &
\multicolumn{1}{c}{$2.5\%$} &
\multicolumn{1}{c}{$97.5\%$} \\
\midrule
$\beta_1$  & -0.7677 & 0.5734 & -1.8974 &  0.3453 \\
$\beta_2$  &  3.9061 & 1.4799 &  1.0359 &  6.7897 \\
$\beta_3$  &  8.4826 & 0.7902 &  6.8905 & 10.0353 \\
$\beta_4$  & 10.4723 & 0.8227 &  8.8485 & 12.0764 \\
$\sigma^2$ &  1.4777 & 0.3002 &  0.9868 &  2.1660 \\
$\nu$      &  2.3409 & 0.5045 &  1.5891 &  3.5602 \\
\bottomrule
\end{tabular}
\end{table}

\begin{figure}
\centering
\includegraphics[width=0.40\textwidth]{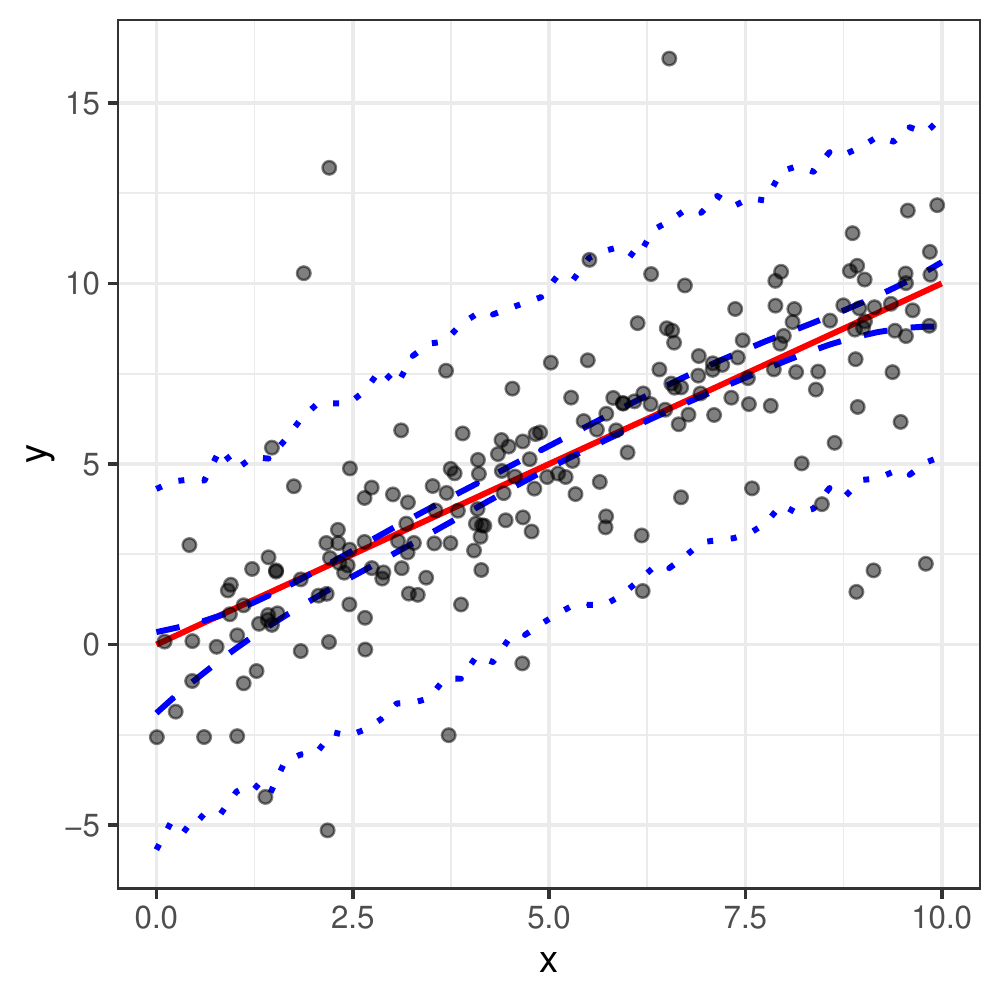}
\caption{The true function $\mu(r)$ (solid red curve), pointwise 95\% credible interval for $\mu(r)$ (dashed blue curve), and pointwise 95\% interval for posterior predictive distribution of $y(r) = \mu(r) + \epsilon$, $\epsilon \sim t_\nu(0, \sigma)$ (dotted blue line), for $r \in [0, 10]$. Observed $y_1, \ldots, y_n$ are shown as block dots.}
\label{fig:tdist-fit}
\end{figure}

\section{Discussion and Conclusions}
\label{sec:discussion}

The density $p(u)$ which arises in direct sampling \citep{WalkerEtAl2011} is monotone, nonincreasing on $[0,1]$, and subject to sudden jumps. This motivated us to consider step functions to approximate $p(u)$. Useful samplers may be obtained for some univariate target distributions where $A_u = \{x \in \Omega : w(x) > cu \}$ is an interval, and may further be combined with rejection sampling to generate exact draws with a small number of rejections. Examples in Sections~\ref{sec:example-car} and \ref{sec:example-tdist} illustrated sampling from non-standard conditionals within in a Gibbs sampler. All three examples already have practical samplers described in the literature; our proposed sampler may be useful when encountering unfamiliar weighted distributions where no such method is readily available.

Care is required in the implementation of the sampler; e.g., the possibility of encountering very small magnitude floating point numbers motivates use of the geometric midpoint and carrying out many of the calculations on the log-scale. The idea may be extended to settings where $A_u$ is a more complicated set such as a union of intervals, provided that the endpoints can be identified without too much computation. Multivariate settings may also be possible, provided that $A_u$ is not too difficult to characterize and draws from $f(\vec{x} \mid u) \propto g(\vec{x}) \ind(\vec{x} \in A_u)$ can be reliably generated.

\section*{Acknowledgements}
The author is grateful to Drs.~Scott Holan, Kyle Irimata, Ryan Janicki, and James Livsey at the U.S. Census Bureau for discussions which motivated this work.


\end{document}